\newtheorem{theorem}{Theorem}[section]
\newtheorem{lemma}[theorem]{Lemma}
\newtheorem{proposition}[theorem]{Proposition}
\newtheorem{definition}[theorem]{Definition}
\newtheorem{remark}[theorem]{Remark}
\newtheorem{assumption}[theorem]{Assumption}
\numberwithin{equation}{section}
\def\begequarr{\begin{eqnarray*}}
\def\endequarr{\end{eqnarray*}}
\def\begarr{\begin{array}}
\def\endarr{\end{array}}
\def\begequ{\begin{equation}}
\def\endequ{\end{equation}}
\def\begdes{\begin{description}}
\def\enddes{\end{description}}
\def\begenu{\begin{enumerate}}
\def\begite{\begin{itemize}}
\def\endite{\end{itemize}}
\def\endenu{\end{enumerate}}
\def\lef[{\left[\begin{array}}
\def\rig]{\end{array}\right]}
\def\begcen{\begin{center}}
\def\endcen{\end{center}}
\newcommand{\be}{\begin{equation}}
\newcommand{\ee}{\end{equation}}
\newcommand{\bk} {\color{black} }
\newcommand{\bal} {\left( \begin{array}}
\newcommand{\bals} {\left[ \begin{array}}
\newcommand{\ear} {\end{array}\right) }
\newcommand{\ears} {\end{array}\right] }
\def\begcen{\begin{center}}
\def\endcen{\end{center}}
\newcommand{\col}{ \mbox{col} }
\def\begs{\begin{split}}
\def\C{\mathrm{C}}
\def\L2{{\cal L}_2}
\def\L2e{{\cal L}_{2e}}
\def\rea{\mathbb{R}}
\def\diag{\mbox{diag}}
\def\col{\mbox{col}}
\def\begequarr{\begin{eqnarray}}
\def\endequarr{\end{eqnarray}}
\def\begequarrs{\begin{eqnarray*}}
\def\endequarrs{\end{eqnarray*}}
\def\begarr{\begin{array}}
\def\endarr{\end{array}}
\def\begequ{\begin{equation}}
\def\endequ{\end{equation}}
\def\begdes{\begin{description}}
\def\enddes{\end{description}}
\def\begenu{\begin{enumerate}}
\def\begite{\begin{itemize}}
\def\endite{\end{itemize}}
\def\endenu{\end{enumerate}}
\def\lef[{\left[\begin{array}}
\def\rig]{\end{array}\right]}
\def\begcen{\begin{center}}
\def\endcen{\end{center}}
\def\begrem{\begin{remark}\rm}
\def\endrem{\end{remark}}
\title{\LARGE \bf A tool for stability and power sharing analysis of a generalized class of droop controllers for high-voltage direct-current transmission systems}
\author{Daniele Zonetti, Romeo Ortega and Johannes Schiffer
 \thanks{D. Zonetti and R. Ortega are with the Laboratoire des Signaux et Syst\'{e}mes, 3, rue Joliot Curie, 91192 Gif-sur-Yvette, France.
        {\tt\small daniele.zonetti@gmail.com, romeo.ortega@lss.supelec.fr}}
        \thanks{J. Schiffer is with the School of Electronic and Electrical Engineering, University of Leeds, Leeds LS2 9JT, UK,
        {\tt\small j.schiffer@leeds.ac.uk}}
}
\begin{document}

\maketitle
\thispagestyle{empty}
\pagestyle{empty}

\begin{abstract} The problem of primary control of high-voltage direct current transmission systems is addressed in this paper, which contains four main contributions. First, to propose a new nonlinear, more realistic, model for the system suitable for primary control design, which takes into account nonlinearities introduced by conventional inner controllers. Second, to determine necessary conditions---dependent on some free controller tuning parameters---for the existence of equilibria. Third, to formulate additional (necessary) conditions for these equilibria to satisfy the power sharing constraints. Fourth, to establish conditions for stability of a given equilibrium point. The usefulness of the theoretical results is illustrated via numerical calculations on a four-terminal example.

\end{abstract}

\section{INTRODUCTION}
 For its correct operation, high-voltage direct current (hvdc) transmission systems---like all electrical power systems---must satisfy a large set of different regulation objectives that are, typically, associated to the multiple time–-scale behavior of the system. One way to deal with this issue, that prevails in practice, is the use of hierarchical control architectures \cite{Egeaalvarez,vrana,andreasson}. Usually, at the top of this hierarchy, a centralized controller called \textit{tertiary control}---based on power flow optimization algorithms (OPFs)---is in charge of providing  the inner controllers with the operating point to which the system has to be driven, according to technical and economical constraints \cite{Egeaalvarez}. If the tertiary control had exact knowledge of such constraints and of the desired operating points of all terminals, then it would be able to formulate a nominal optimization problem and  the lower level   (also called \textit{inner-loop})  \bk controllers could operate under \textit{nominal conditions}. However, such exact knowledge of all system parameters is impossible in practice, due to uncertainties and lack of information. Hence, the operating points generated by the tertiary controller may, in general, induce unsuitable \textit{perturbed conditions}. To cope with this problem further control layers, termed \textit{primary} and \textit{secondary control}, are introduced. These take action---whenever a perturbation occurs---by promptly adjusting the references provided by the tertiary control in order to preserve properties that are essential for the correct and safe operation of the system. The present paper focuses on the primary control layer. Irrespectively of the perturbation and in addition to ensuring stability, primary control has the task of preserving two fundamental criteria: a prespecified power distribution (the so-called \textit{power sharing}) and keeping the terminal voltages near the nominal value \cite{Beerten2013}. Both objectives are usually achieved by an appropriate control of the dc voltage of one or more terminals at their point of interconnection with the hvdc network \cite{sun,haile,vrana}.\\
Clearly, a \textit{sine qua non} requirement for the fulfillment of these objectives is the existence of a \textit{stable equilibrium point} for the perturbed system. The ever increasing use of power electronic devices in modern electrical networks, in particular the presence of  \textit{constant power devices} (CPDs), induces a highly nonlinear behavior in the system---rendering  the analysis of existence and stability of equilibria very complicated.  Since linear, inherently stable, models, are usually employed for the description of primary control of dc grids \cite{haile,andreasson,zhao}, little attention has been paid to the issues of stability and existence of equilibria. This fundamental aspect of the problem has only recently attracted the attention of power systems researchers \cite{depersiscpl,simpsoncpl,barabanov} who, similarly to the present work, invoke tools of nonlinear dynamic systems analysis, to deal with the intricacies of the actual nonlinear behavior.\\

The main contributions and the organization of the paper are as follows. Section~\ref{sec_modeling} is dedicated to the formulation---under some reasonable assumptions---of a reduced, nonlinear model of an hvdc transmission system in closed-loop with standard \textit{inner-loop} controllers.    In Section~\ref{sec_short} a further model simplification, which holds for a general class of dc systems with short lines configurations, is  presented.   A first implication is that both obtained  models, which are \textit{nonlinear}, may in general have no equilibria. \bk  Then, we consider a generalized class of primary controllers, that includes the special case of the ubiquitous \textit{voltage droop control}, and establish necessary conditions on the control parameters for the existence of an equilibrium point. This is done in Section~\ref{sec_existence}. An extension of this result to the problem of existence of equilibria that verify the power sharing property is  carried out in Section~\ref{sec_powersharing}. A last contribution is provided in Section~\ref{sec_stability}, with a (local) stability analysis of a \textit{known} equilibrium point, based on Lyapunov's first method. The usefulness of the theoretical results is illustrated with a numerical example in Section~\ref{sec_example}. We wrap-up the paper by drawing some conclusions and providing guidelines for future investigation.\\

\noindent \textbf{Notation.} For a set $\mathcal{N}=\{l,k,\dots,n\}$ of, possibly unordered, elements,  we denote with $i\sim\mathcal{N}$ the elements \mbox{$i=l,k,\dots,n$}. All vectors are column vectors. Given positive integers $n$, $m$, the symbol  $ 0_n\in\rea^n$ denotes the vector of all zeros,  $  0_{n\times m}$ the $n \times m$ column matrix of all zeros, \mbox{$\mathsf{1}_n\in\rea^n$} the vector with all ones and $\mathbb{I}_n$ the $n \times n$ identity matrix. When clear from the context dimensions are omitted and vectors and matrices introduced above are simply denoted by the symbols $0$, $\mathsf{1}$ or $\mathbb{I}$. For a given matrix $A$, the $i$-th colum is denoted by $A_i$. Furthermore, $\diag\{a_i\}$ is a diagonal matrix with entries $a_i \in \rea$ and $\text{bdiag}\{A_i\}$ denotes a block diagonal matrix with matrix-entries $A_i$. $x:=\col(x_1,\dots,x_n)\in\rea^n$ denotes a vector with entries $x_i \in \rea$. When clear from the context it is simply referred to as  $x:=\col(x_i)$. 
\section{NONLINEAR MODELING OF HVDC TRANSMISSION SYSTEMS}\label{sec_modeling}
 \subsection{A graph description}
The main components of an hvdc transmission system are ac to dc power converters and dc transmission lines. The power converters connect ac subsystems---that are associated to renewable generating units or to ac grids---to an hvdc network. In \cite{zonettiCEP} it has been shown that an hvdc transmission system can be represented by a directed
graph\footnote{A directed graph is an ordered 3-tuple, $\mathcal{G}^\uparrow=\{\mathcal{\mathcal N,\mathcal E},\Pi\}$,
consisting of a finite set of nodes $\mathcal{N}$, a finite set of directed edges $\mathcal{E}$ and a mapping $\Pi$ from $\mathcal{E}$ to the set of ordered pairs of $\mathcal{N}$.} without self-loops, where the power units---\textit{i.e.} power converters and transmission lines---correspond to edges and the buses correspond to nodes. Hence, a first step towards the construction of a suitable model for primary control analysis and design is then the definition of an appropriate graph description of the system topology that takes into account the primary control action.\\

We consider an hvdc transmission system described by a graph $\mathcal{G}^\uparrow(\mathcal{N},\mathcal{E})$, where $\mathsf{n}=\mathsf{c}+1$ is the number of nodes,   where the additional node is used to model the ground node, \bk   and $\mathsf{m}=\mathsf{c}+\mathsf{t}$ is the number of  edges, with $\mathsf{c}$ and $\mathsf{t}$ denoting the number of converter and transmission units respectively.   We implicitly assumed that transmission (interior) buses are eliminated  via Kron reduction \cite{doerfler13_3}. \bk  We further denote by $\mathsf{p}$ the number of converter units not equipped with primary control---termed \textit{PQ units} hereafter---and by $\mathsf{v}$ the number of converter units equipped with primary control---that we call \textit{voltage-controlled units},   with $\mathsf{c}=\mathsf{p}+\mathsf{v}.$ \bk  To facilitate reference to different units we find it convenient to partition the set of   converter   nodes (respectively converter edges) into two ordered subsets $\mathcal{N}_P$ and $\mathcal{N}_V$ (respectively $\mathcal{E}_P$ and $\mathcal{E}_V$)  corresponding  to $PQ$ and voltage-controlled nodes (respectively edges). The incidence matrix associated to the graph is given by:
\begin{equation}\label{incidencedc2}
\mathcal{B} =\begin{bmatrix}
\mathbb{I}_{\mathsf{p}} &0&\mathcal{B}_{{P}} \\
0&\mathbb{I}_{\mathsf{v}} & \mathcal{B}_{{V}} \\
-\mathsf{1}^\top_{\mathsf{p}} &-\mathsf{1}^\top_{\mathsf{v}} & 0 
 \end{bmatrix} \in\mathbb{R}^{ \mathsf{n}\times  \mathsf{m}},
\end{equation}
where the submatrices $\mathcal{B}_{{P}} \in\mathbb{R}^{\mathsf{p}\times\mathsf{t}} $ and $\mathcal{B}_{{V}} \in\mathbb{R}^{\mathsf{v}\times\mathsf{t}} $ fully capture the topology of the hvdc network with respect to the different units.

  \subsection{Converter units} \bk
For    a characterization   of the converter units we consider power converters based on voltage source converter (VSC) technology \cite{iravani}.   Since this paper focuses on primary control, we first provide a description of a single VSC  in closed-loop with the corresponding \textit{inner-loop} controller.   In hvdc transmission systems, the inner-loop controller  is usually achieved via a cascaded control scheme consisting of a current control loop whose setpoints are specified by an outer power loop \cite{colethesis}. \bk Moreover, such a control scheme  employs a phase-locked-loop (PLL) circuit, which is a circuit that synchronizes an oscillator with a reference sinusoidal input \cite{PLLbook}.   The  PLL is thus locked to the phase $a$ of the voltage   $v_{ac,i}(t)$    and allows,   under the assumption of balanced operation of the phases, to express the model in a suitable $dq$ reference frame, upon which the current and power loops are designed, see \cite{zonetti2016energy}, \cite{blasko} for more details on this topic.  For these layers of control, different strategies can be employed in practice. Amongst these, a technique termed  \textit{vector control} that consists of combining feedback linearization and PI control  is very popular, see  \cite{xuinner,blasko,lee} for an extensive overview on this control strategy. A schematic description of the VSC and of the overall control architecture, which also includes,   if any,    the primary control layer, is given in   Fig.~\ref{vectorcontrol}. \bk
\begin{figure*} 
 \centering
 \includegraphics[width=0.8\linewidth]{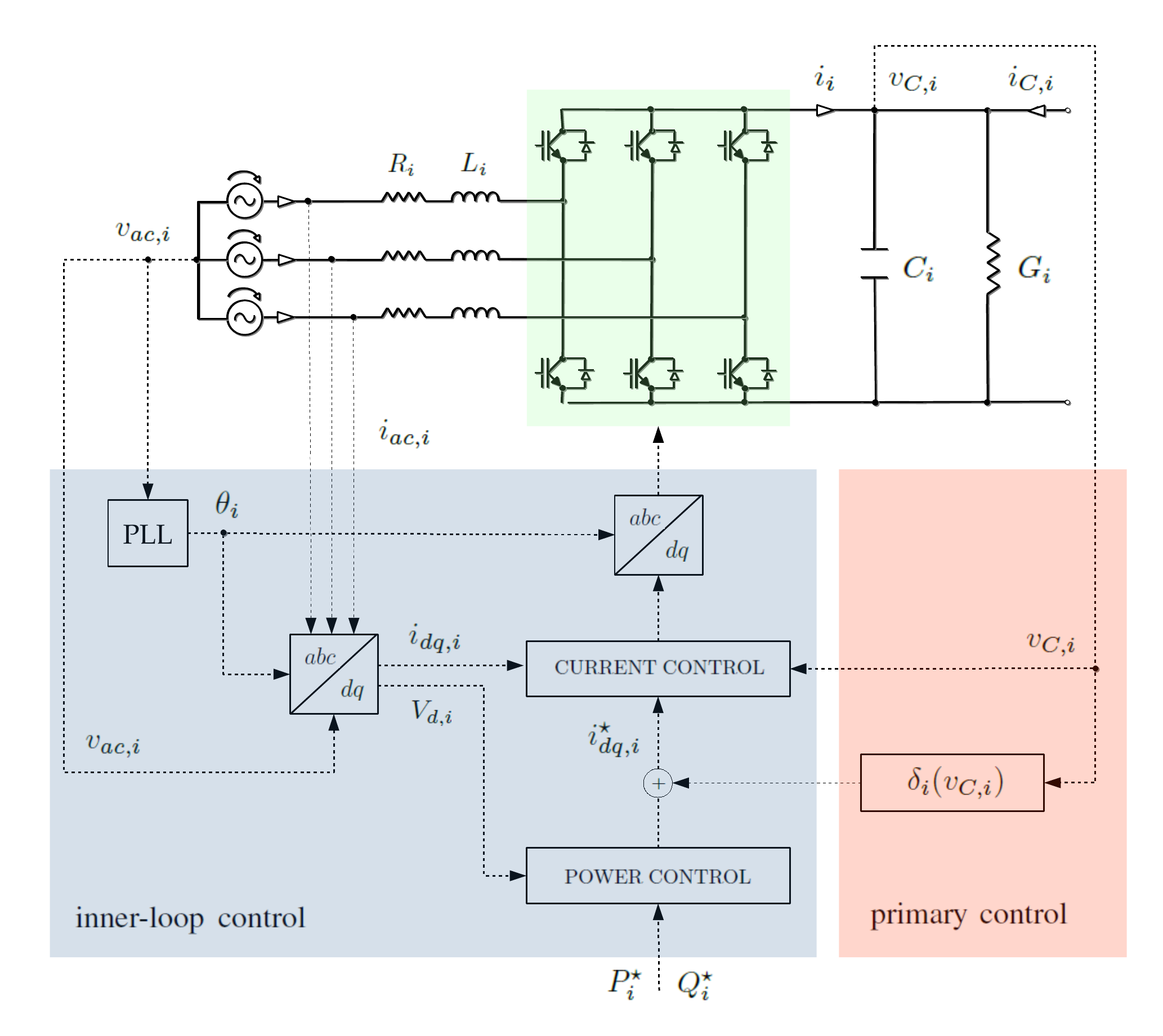}
 \caption{  Control architecture of a three-phase voltage source converter that interfaces an ac subsystem---characterized by a three--phase input voltage $v_{ac,i}(t)$---to an hvdc network---characterized by an ingoing dc current $i_{C,i}(t)$.   Bold lines represent electrical connections, while dashed lines represent signal connections \cite{zonetti2016energy}. \bk }
 \label{vectorcontrol}
\end{figure*}
 As detailed above, the \textit{inner-loop} control scheme is based on an appropriate $dq$ representation of the ac-side dynamics of the VSC, which for balanced operating conditions is given by the following second order dynamical system \cite{blasko}:
\begin{equation}\label{vsc-currents}
\begin{aligned}
L_i\dot I_{d,i}&=-R_iI_{d,i}+L_i\omega_iI_{q,i}-d_{d,i}v_{C,i}+V_{d,i}\\
L_i\dot I_{q,i}&=-L_i\omega_iI_{d,i}-R_iI_{q,i}-d_{q,i}v_{C,i}+V_{q,i}\\
\end{aligned}
\end{equation}
where $I_{d,i}\in\mathbb{R}$ and $I_{q,i}\in\mathbb{R}$ denote the direct and quadrature currents, $v_{C,i}\in\mathbb{R}_+$ denotes the dc voltage, $d_{d,i}\in\mathbb{R} $ and $d_{q,i}\in\mathbb{R}$ denote the direct and quadrature duty ratios,  $V_{d,i}\in\mathbb{R}$ and $V_{q,i}\in\mathbb{R}$ denote the direct and quadrature input voltages, $L_i\in\mathbb{R}_+$ and $R_i\in\mathbb{R}_+$ denote  the (balanced) inductance and the resistance respectively. Moreover, the dc voltage dynamics  can be described by the following scalar dynamical system:
\begin{equation}\label{vsc-voltage}
C_i\dot v_{C,i}=-G_iv_{C,i}+ i_i+i_{C,i},\quad i_{i}:=d_{d,i}I_{d,i} +d_{q,i} I_{q,i},
\end{equation}
where $i_{C,i}\in\mathbb{R}$ denotes the current coming from the dc network, $i_{i} $ denotes the dc current injection via the VSC, $C_i\in\mathbb{R}_+$ and $G_i\in\mathbb{R}_+$ denote the capacitance and the conductance respectively. For a characterization of the power injections we  consider the standard definitions of instantaneous active and reactive power associated to the ac-side of the VSC, which are given by \cite{akagi,teodorescu}:
\begin{equation}\label{powers}
\begin{aligned}
P_i   &:=V_{d,i}   I_{d,i}   +V_{q,i}   I_{q,i} , \quad Q_i   :=V_{q,i}  I_{d,i}   -V_{d,i}  I_{q,i},
\end{aligned}
\end{equation}
while the dc power associated to the dc-side is given by:
\begin{equation}\label{dcpower}
P_{\mathrm{DC},i}:=v_{C,i}i_i.
\end{equation}
We now make two standard assumptions on the design of the \textit{inner-loop} controllers.
\smallbreak
 \begin{assumption}\label{Vq0} $V_{q,i}=V_{q,i}^\star=0,\quad \forall t\geq0.$ 
\end{assumption}
\smallbreak
\begin{assumption}\label{assCfast} All \textit{inner-loop} controllers are characterized by stable current control schemes. Moreover, the employed schemes guarantee instantaneous and exact tracking of the desired currents.
\end{assumption}
\smallbreak
Assumption \ref{Vq0} can be legitimized  by appropriate design of the PLL mechanism, which is demanded to fix the $dq$ transformation angle so that the quadrature voltage is always kept  zero after very small transients. Since a PLL usually operates in a range of a few $ms$, which is smaller than the time scale at which the power loop evolves, these transients can be neglected.\\
Similarly, Assumption \ref{assCfast} can be legitimized by an appropriate design of the current control scheme so that the resulting closed-loop system is internally stable and has a very large bandwidth compared to the dc voltage dynamics and to the outer loops. In fact, tracking of the currents is usually achieved in $10-50$ $ms$, while dc voltage dynamics and outer loops evolve at a much slower time-scale \cite{Egeaalvarez}.\\

 Under Assumption \ref{Vq0} and Assumption \ref{assCfast}, from the stationary equations of the currents dynamics expressed by \eqref{vsc-currents}, \textit{i.e.} for $\dot{I}_{d,i}^\star=0$, $\dot{I}_{q,i}^\star=0$, we have that
 \begin{equation}\label{ddq}
 \begin{aligned}
d_{d,i}^\star&=\frac{1}{v_{C,i}}\left(-R_iI_{d,i}^\star+L_i\omega_i I_{q,i}^\star+V_{d,i}^\star\right),\quad
d_{q,i}^\star=\frac{1}{v_{C,i}}\left(-L_i\omega_iI_{d,i}^\star-R_i I_{q,i}^\star\right),
\end{aligned}
 \end{equation}
 where $I_{d,i}^\star$ and $I_{q,i}^\star$ denote the controlled $dq$ currents (the dynamics of which are neglected under Assumption \ref{assCfast}), while $V_{d,i}^\star$ denotes the corresponding direct voltage on the ac-side of the VSC. By substituting \eqref{ddq} into \eqref{vsc-voltage} and recalling the definition of active power provided in \eqref{powers}, the controlled dc current can thus be expressed as
  \begin{equation}\label{injectedc}
 i_i^\star=\frac{V_{d,i}^\star I_{d,i}^\star-R_i(I_{d,i}^\star)^2-R_i(I_{q,i}^\star)^2}{v_{C,i}}=\frac{P_i^\star-D_i^\star}{v_{C,i}},
 \end{equation}
where
\begin{equation}\label{PD}
P_i^\star:=V_{d,i}^\star I^\star_{d,i},\quad D_i^\star:=R_i\left[(I_{d,i}^\star)^2+(I_{q,i}^\star)^2\right]
\end{equation}
denote respectively the controlled active power on the ac-side and the power dissipated internally by the converter. We then make a further assumption.
\smallbreak
\begin{assumption}\label{assdiss} $D_i^\star=0.$
\end{assumption}
\smallbreak
Assumption \ref{assdiss} can be justified by the high efficiency of the converter, \textit{i.e.} by the small values of the balanced three-phase resistance $R$, which yield $D_i^\star\approx 0$. Hence, by replacing \eqref{injectedc} into \eqref{vsc-voltage} and using the definitions \eqref{PD}, we obtain the following scalar dynamical system \cite{teodorescu}:
 \begin{equation}\label{Vdyn0}
 C_i\dot v_{C,i}=-G_iv_{C,i}+\frac{V^\star_{d,i}}{v_{C,i}}I^\star_{d,i}+i_{C,i}
 \end{equation}
with $i\sim\mathcal{E}_P\cup\mathcal{E}_V$, which describes the dc-side dynamics of a VSC under assumptions \ref{Vq0}, \ref{assCfast} and \ref{assdiss}. By taking \eqref{Vdyn0} as a point of departure, we   next   derive the dynamics of the current-controlled VSCs in closed-loop with the outer power control. \smallbreak

If the unit is a \textit{PQ unit}, the current references are simply determined by the outer power loop via \eqref{powers}   with constant active power $P_j^\mathrm{ref}$ and reactive power $Q_j^\mathrm{ref}$, which by noting that $V_{q,j}^\star=0$, are given by:
\begin{equation}\label{refsPQ}
I_{d,j}^\star  =\frac{P^{\mathrm{ref}}_j}{V^\star_{d,j}  },\quad I_{q,j}^\star =-\frac{Q^\mathrm{ref}_j}{V^\star_{d,j}   },
\end{equation}
with $j\sim\mathcal{E}_P$, which replaced into \eqref{Vdyn0} gives
 \begin{equation}\label{VdynPQ}
 C_j\dot v_{C,j}=-G_jv_{C,j}+u_j(v_{C,j})+i_{C,j}.
 \end{equation}
with the new current variable $u_j$ and the dc voltage $v_{C,j}$ verifying the hyperbolic constraint $P^\mathrm{ref}_{j}= v_{C,j}u_j$,  $j\sim\mathcal{E}_P$. Hence, a \textit{PQ unit} can be approximated, with respect to its power behavior, by a constant power device of value $P^\mathrm{ref}_{P,j}:=P^\mathrm{ref}_j$, see also Fig.~\ref{approx1}. On the other hand, if the converter unit is a \textit{voltage-controlled unit}, the current references are   modified    according to the primary control strategy. A common approach in this scenario is to introduce an additional deviation (also called \textit{droop}) in the direct current reference---obtained from the outer power loop---as a function of the dc voltage, while keeping the calculation of the reference of the quadrature current unchanged:
\begin{equation}\label{refscurrentZIP}
I_{d,k}^\star=\frac{P_k^\mathrm{ref}}{V_{d,k}^\star}+\delta_k(v_{C,k}),\quad 
I_{q,k}^\star=-\frac{Q_k^\mathrm{ref}}{V_{d,k}^\star},
\end{equation}
with $k\sim\mathcal{E}_V$ and where $\delta_k(v_{C,k})$ represents the state-dependent contribution provided by the primary control.  We propose the primary control law:
\begin{equation}\label{refsZIP}  
\delta_k(v_{C,k} )= \frac{1}{V_{d,k}^\star}\left(\mu_{P,k} +\mu_{I,k}v_{C,k}+\mu_{Z,k}v^2_{C,k}\right),  
\end{equation}
with $k\sim\mathcal{E}_V$ and where $\mu_{P,k}$, $\mu_{I,k}, \mu_{Z,k} \in\mathbb{R}$ are free control parameters. By replacing \eqref{refscurrentZIP}-\eqref{refsZIP} into \eqref{Vdyn0}, we obtain
 \begin{equation}\label{VdynZIP}
 C_k\dot v_{C,k}=-(G_k-\mu_{Z,k})v_{C,j}+\mu_{I,k}+u_k(v_{C,k})+i_{C,k},
 \end{equation}
with the new current variable $u_k$ and the dc voltage $v_{C,k}$ verifying the hyperbolic constraint $P^\mathrm{ref}_{k}+\mu_{P,k}=v_{C,k}u_k$, $k\sim\mathcal{E}_V$. Moreover, with Assumption \ref{assdiss} the injected dc power is given by:
 \begin{equation}\label{DCinjected}
 P_{\mathrm{DC},k}(v_{C,k})= P^\mathrm{ref}_{V,k}+\mu_{I,k}v_{C,k}+\mu_{Z,k}v_{C,k}^2,
 \end{equation}
with
$$
P^\mathrm{ref}_{V,k}:=P_k^\mathrm{ref}+\mu_{P,k},
$$
from which follows, with the control law \eqref{refsZIP},  that a \textit{voltage-controlled unit} can be approximated, with respect to its power behavior, by  a ZIP model, \textit{i.e.}  the parallel connection of a constant impedance (Z), a constant current source/sink (I) and a constant power device (P). More precisely---see also Fig.~\ref{approx2}---the parameters $P^\mathrm{ref}_{{V},k}$, $\mu_{I,k}$ and $\mu_{Z,k}$ represent the constant power, constant current and constant impedance of the  ZIP model.\bk
\begin{figure*}
\begin{subfigure}{0.44\textwidth}
\includegraphics[width=\linewidth]{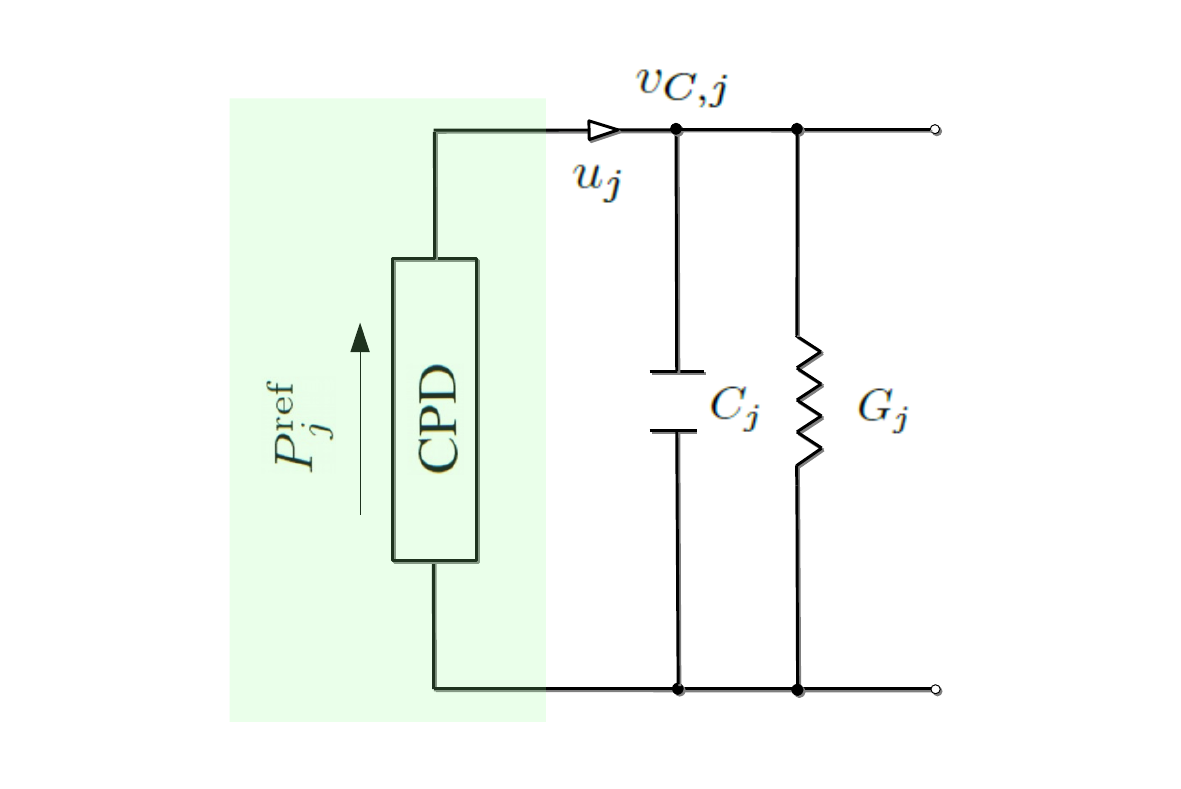}
\caption{Equivalent circuit scheme for \textit{PQ units}.}
 \label{approx1}
\end{subfigure}
\begin{subfigure}{0.44\textwidth}
\includegraphics[width=\linewidth]{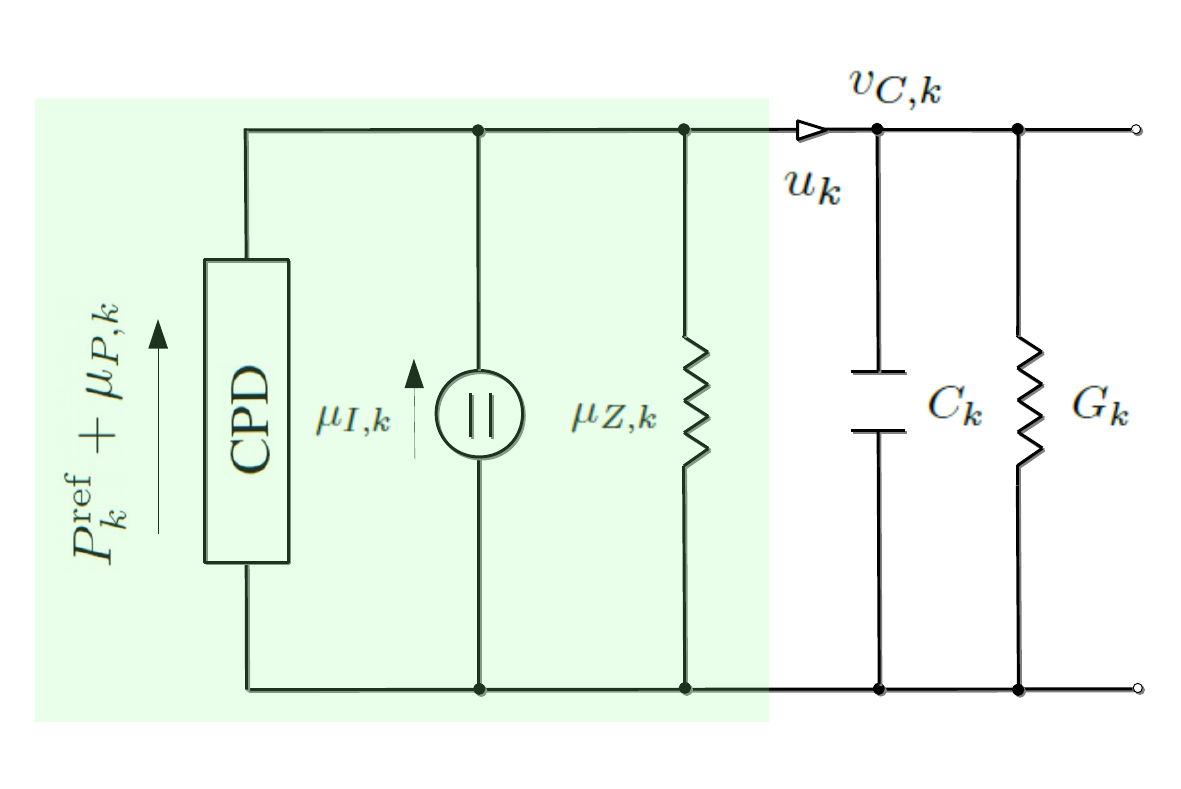}
\caption{Equivalent circuit scheme for \textit{voltage-controlled units}.} \label{approx2}
\end{subfigure}
 \caption{Equivalent circuit schemes of the converter units with constant power devices (CPDs), under Assumption \ref{assCfast}. }
\end{figure*}
Finally, the dynamics of the \textit{PQ units} can be represented by the following scalar systems: 
\begin{equation*}
\begin{aligned}
C_j\dot v_{C,j}&=-G_{j}v_{C,j}+u_{j}+i_{C,j},\\
0&= P^\mathrm{ref}_{{P},j}-v_{C,j}u_{j},
\end{aligned}
\end{equation*}
while for the dynamics of the \textit{voltage-controlled units} we have:
\begin{equation*}
\begin{aligned}
C_k\dot v_{C,k}&=-(G_{k}-\mu_{Z,k})v_{C,k}+\mu_{I,k}+u_{k}+i_{C,k},\\
0&=P^\mathrm{ref}_{{V},k}-v_{C,k}u_{k},
\end{aligned}
\end{equation*}
with $j\sim\mathcal{E}_{{P}},k\sim\mathcal{E}_{{V}}$ and where $v_{C,j},v_{C,k}\in\mathbb{R}_+$ denote the voltages across the capacitors, $i_{C,j},i_{C,k}\in\mathbb{R}$ denote the network currents, $u_{j},u_{k}\in\mathbb{R}$ denote the currents flowing into the constant power devices, $G_{j}\in\mathbb{R}_+$, $G_{k}\in\mathbb{R}_+$, $C_{j}\in\mathbb{R}_+$, $C_{k}\in\mathbb{R}_+$ denote the conductances and capacitances. The aggregated model is then given by: 
\begin{equation}\label{aggregatedC} 
\begin{aligned}
\begin{bmatrix}
C_P\dot v_{{P}}\\C_V\dot v_{{V}}
\end{bmatrix} =-&\begin{bmatrix}
G_{{P}}&0\\
0&G_{{V}}+G_{Z}
\end{bmatrix}\begin{bmatrix}
v_{{P}}\\v_{{V}}
\end{bmatrix}+ \begin{bmatrix}
u_{{P}}\\u_{{V}}
\end{bmatrix}+
\begin{bmatrix}
0\\\bar u_V
\end{bmatrix}+ \begin{bmatrix}
i_{{P}}\\i_{V}
\end{bmatrix} ,
\end{aligned}
\end{equation}
together with the algebraic constraints:
 \begin{equation*}
 \begin{aligned}
  P^\mathrm{ref}_{{P},j}&=v_{{P},j}u_{{P},i},\quad  P^\mathrm{ref}_{{V},k} =v_{{V},k} u_{{V},k},
  \end{aligned}
\end{equation*} 
with $i\sim\mathcal{E}_{{P}}, k\sim\mathcal{E}_{{V}} $ and the following definitions.
\begin{itemize}
\item[-] State vectors
$$v_{{P}}: =\col(v_{C,j})\in\mathbb{R}^{\mathsf{p}},\quad v_{{V}}: =\col(v_{C,k})\in\mathbb{R}^{\mathsf{v}}.$$
 
\item[-] Network ingoing currents 
$$i_{{P}}:=\col(i_{C,j})\in\mathbb{R}^{\mathsf{p}},\quad i_{{ V}}:=\col(i_{C,k})\in\mathbb{R}^{\mathsf{v}}.$$ 
\item[-] Units ingoing currents 
$$ u_{{P}}:=\col(u_{j})\in\mathbb{R}^{\mathsf{p}},\quad u_{{V}}:=\col(u_{k})\in\mathbb{R}^{\mathsf{v}}.$$ 
\item[-] External sources $\bar u_V:=\col(\mu_{I,k})\in\mathbb{R}^{\mathsf{v}}$.
 \item[-] Matrices
 \begin{equation*}
\begin{aligned} 
C_{{P}}:&=\mathrm{diag}\{C_{j}\}\in\mathbb{R}^{\mathsf{p}\times\mathsf{p}},\;\; C_{{V}}:=\mathrm{diag}\{C_{k}\}\in\mathbb{R}^{\mathsf{v}\times\mathsf{v}},\\
G_{{P}}:&=\mathrm{diag}\{G_{j}\}\in\mathbb{R}^{\mathsf{p}\times\mathsf{p}},\;\;  G_{{V}}:=\mathrm{diag}\{G_{k}\}\in\mathbb{R}^{\mathsf{v}\times\mathsf{v}},\;\;
G_Z:=\mathrm{diag}\{-\mu_{Z,k}\}\in\mathbb{R}^{\mathsf{v}\times\mathsf{v}}.\;\;
\end{aligned}
\end{equation*}
\end{itemize}
  \subsection{Interconnected model}
For the model derivation of the hvdc network we assume that the dc transmission lines can be described by standard, single-cell $\pi$-models. However, it should be noted that at each converter node the line capacitors will result in a parallel connection with the output capacitor of the converter \cite{SHAetal}. Hence, the capacitors at the dc output of the converter can be replaced by equivalent capacitors and the transmission lines described by simpler $RL$ circuits, \bk for which it is straightforward to obtain the aggregated model \cite{zonettiCEP}:
\begin{equation}\label{linesrecall}
\begin{aligned}
L_T\dot i_{T} &=-R_{T}i_{T}+v_{T},
\end{aligned}
\end{equation}
with $i_T:=\mathrm{col}(i_{T,i})\in\mathbb{R}^\mathsf{t}$, $v_T:=\mathrm{col}(v_{T,i})\in\mathbb{R}^\mathsf{t}$ denoting the currents through and the voltages across the lines and $L_T:=\mathrm{col}(L_{T,i})\in\mathbb{R}^{\mathsf{t}\times\mathsf{t}}$, $R_T:=\mathrm{col}(R_{T,i})\in\mathbb{R}^{\mathsf{t}\times\mathsf{t}}$ denoting the inductance and resistance matrices.  In order to obtain the reduced, interconnected model of the hvdc transmission system under Assumption \ref{assCfast}, we need  to consider the interconnection laws determined by the incidence matrix \eqref{incidencedc2}. Let us define the node and edge vectors:
\begin{equation*}
V_n:=\begin{bmatrix}
\mathcal{V}_{{P}}\\\mathcal{V}_{{V}} \\0
\end{bmatrix}\in\mathbb{R}^{\mathsf{c}+1},\;
V_e:=\begin{bmatrix}
v_{{P}}\\v_{{V}}\\v_T
\end{bmatrix}\in\mathbb{R}^\mathsf{m},\;
I_e:=\begin{bmatrix}
i_{{P}}\\i_{{V}}\\i_T
\end{bmatrix}\in\mathbb{R}^{ \mathsf{m}}.
\end{equation*}
By using the definition of the incidence matrix \eqref{incidencedc2} together with the Kirchhoff's current  and voltage  laws given by \cite{circuitavdsSCL,zonettiejc}: 
\begin{equation*}
\mathcal{B}I_e=0,\qquad V_e=\mathcal{B}^\top V_n,
\end{equation*}
  we obtain:
\begin{equation}\label{ipiv}
\begin{aligned}
i_{{P}}&=- \mathcal{B}_{{P}}  v_{{P}},\quad
i_{{V}} =-\mathcal{B}_{{V}}  v_{{V}},\quad v_T=\mathcal{B}_{{P}}^\top v_P+\mathcal{B}_{{V}}^\top v_V.
\end{aligned}
\end{equation}
Replacing $i_P$ and $i_V$ in \eqref{aggregatedC} and $v_T$ in \eqref{linesrecall}, leads to the interconnected model: 
\begin{equation} \label{overallred}
\begin{aligned}
 \begin{bmatrix}
C_P\dot v_{{P}}\\C_V\dot v_{{V}}\\L_T\dot i_T
\end{bmatrix}=&\begin{bmatrix}
-G_{{P}}&0&- \mathcal{B}_{{P}} \\
0& -G_{{V}}&- \mathcal{B}_{{V}} \\
\mathcal{B}_{{P}}^\top&\mathcal{B}_{{P}}^\top&-G_Z
\end{bmatrix}\begin{bmatrix}
v_{{P}}\\v_{{V}}\\i_T
\end{bmatrix}+ \begin{bmatrix}
u_{{P}}\\u_{{V}}\\0
\end{bmatrix}+
\begin{bmatrix}
0\\\bar u_V\\0
\end{bmatrix},
\end{aligned}
\end{equation}
together with the algebraic constraints:
 \begin{equation}\label{cplred}
 \begin{aligned}
  P^\mathrm{ref}_{{P},j}=v_{{P},j}u_{{P},j},\quad   P^\mathrm{ref}_{{V},k}=v_{{V},k} u_{{V},k},
  \end{aligned}
\end{equation} 
with $i\sim\mathcal{E}_{{P}}, k\sim\mathcal{E}_{{V}}$.
\begin{remark}\label{drooprem}   With the choice 
$$\mu_{P,k}=0,\quad \mu_{I,k}= d_kv_{C}^\mathrm{nom},\quad \mu_{Z,k}=- d_k,$$
the primary control  \eqref{refsZIP} reduces to:
\begin{equation*}
\delta_k(v_{C,k} )=-\frac{d_k}{V_{d,k}^\star}(v_{C,k}  -v_{C}^\mathrm{nom}),
\end{equation*}
while the injected current is simply given by
$$
i_k^\star=\frac{V_{d,k}^\star}{v_{C,k}}I_{d,k}^\star=\frac{P_k^\mathrm{ref}}{v_{C,k}} - d_k(v_C^\mathrm{nom}-v_{C,k}),
$$
 with $k\sim\mathcal{E}_V$. This is exactly the conventional, widely diffused, voltage droop control \cite{araujo,haile,vrana}, where $d_k$ is called droop coefficient and $v^\mathrm{nom}_C$ is the nominal voltage of the hvdc system. The conventional droop control can be interpreted as an appropriate parallel connection of a current source with an impedance, which is put in parallel with a constant power device, thus resulting in a ZIP model. A similar model is encountered in \cite{Beerten2013} and should  be contrasted with the models provided in \cite{andreasson,zhao}, where the contribution of the constant power device is absent. 
\end{remark}
\smallbreak
\begin{remark}\label{gendc}
A peculiarity of hvdc transmission systems with respect to generalized dc grids is the absence of traditional loads. Nevertheless, the aggregated model of the converter units  \eqref{aggregatedC} can be still employed for the modeling of dc grids with no loss of generality, under the assumption that loads can be represented either by \textit{PQ units} (constant power loads) or by \textit{voltage-controlled units} with assigned parameters (ZIP loads). This model should be contrasted with the linear models adopted in \cite{zhao,andreasson} for dc grids, where loads are modeled as constant current sinks.
\end{remark}
\section{A REDUCED MODEL FOR GENERAL DC SYSTEMS WITH SHORT LINES CONFIGURATIONS}\label{sec_short} 
Since hvdc transmission systems are usually characterized by very long, \textit{i.e.} dominantly inductive, transmission lines, there is no clear time-scale separation between the dynamics of the power converters and the dynamics of the hvdc network. This fact should be contrasted with traditional power systems---where a time-scale separation typically holds because of the very slow dynamics of generation and loads compared to those of transmission lines \cite{sauer}---and microgrids---where a time-scale separation is justified by the short length, and consequently fast dynamics, of the lines \cite{schiffer2016}.   Nevertheless, as mentioned in Remark \ref{gendc}, the model \eqref{overallred}-\eqref{cplred} is suitable for the description of a very general class of dc grids. By taking this model as a point of departure, we thus introduce a reduced model that is particularly appropriate for the description of a special class of dc grids, \textit{i.e.} dc grids with short lines configurations. This class includes, among the others, the widely popular case of dc microgrids \cite{akagidc} and the case of hvdc transmission systems with back-to-back configurations \cite{bucher}. \bk For these configurations, we can then make the following assumption. \bk \smallbreak

\begin{assumption}\label{assTfast} The dynamics of the dc transmission lines  evolve on a time-scale that is much faster than the time-scale at which the dynamics of the voltage capacitors evolve.
\end{assumption}
\smallbreak

Under Assumption \ref{assTfast}, \eqref{linesrecall} reduces to: 
\begin{equation}\label{linesred}
i_T\equiv i_T^\star=G_T v_T,
\end{equation} 
where $i_T^\star$ is the steady-state vector of the line currents and $G_T:=R_T^{-1}$ the conductance matrix of the transmission lines. By replacing the expression \eqref{linesred} into \eqref{overallred} we finally obtain:
\begin{equation}\label{reduced}
\begin{aligned}
 \begin{bmatrix}
C_P\dot v_{{P}}\\C_V\dot v_{{V}}
\end{bmatrix}=-&\begin{bmatrix}
\mathcal{L}_{{P}}+G_{{P}}&\mathcal{L}_m\\
\mathcal{L}_m^\top&\mathcal{L}_{{V}}+G_{{V}}+G_Z
\end{bmatrix}\begin{bmatrix}
v_{{P}}\\v_{{V}}
\end{bmatrix}+ \begin{bmatrix}
u_{{P}}\\u_{{V}}
\end{bmatrix}+
\begin{bmatrix}
0\\\bar u_V
\end{bmatrix},
\end{aligned}
\end{equation}
together with the algebraic constraints \eqref{cplred} and where we defined
\begin{equation*}
\begin{aligned}
\mathcal{L}_P:&=\mathcal{B}_{{P}}G_L\mathcal{B}_{{P}}^\top,\quad\mathcal{L}_m:=\mathcal{B}_{{P}}G_L\mathcal{B}_{{V}}^\top,\quad
\mathcal{L}_V:=\mathcal{B}_{{V}}G_L\mathcal{B}_{{V}}^\top.
\end{aligned}
\end{equation*}
\begin{remark} 
The matrix:
\begin{equation*}
\mathcal{L}:=\begin{bmatrix}
\mathcal{L}_{{P}} &\mathcal{L}_m\\
\mathcal{L}_m^\top&\mathcal{L}_{{V}}
\end{bmatrix}\in\mathbb{R}^{\mathsf{c}\times\mathsf{c}}
\end{equation*}
is the Laplacian matrix associated to the weighted undirected graph $\bar{\mathcal{G}}^w$, obtained from the (unweighted directed) graph $\mathcal{G}^\uparrow$ that describes the hvdc transmission system by: 1) eliminating the reference node and all edges connected to it; 2) assigning as weights of the edges corresponding to transmission lines the values of their conductances. Similar definitions are also encountered in \cite{andreasson,zhao}.   
 \end{remark}

\section{CONDITIONS FOR EXISTENCE OF AN EQUILIBRIUM POINT}\label{sec_existence}
 From an electrical point of view, the reduced system \eqref{overallred}-\eqref{cplred} is a linear  $RLC$    circuit, where at each node a constant power device is attached.  It has been observed in experiments and simulations that the presence of constant power devices may seriously affect the dynamics of    these    circuits hindering the achievement of a constant, stable behavior of the state variables---the dc voltages in the present case \cite{cplcooley,kwasinski11,barabanov,sanchez}. A first objective is thus to determine  conditions on the free control parameters of the system \eqref{overallred}-\eqref{cplred}     for   the existence of an equilibrium point. 
   Before presenting the main result of this section, we make an important observation: since the steady-state of the system \eqref{overallred}-\eqref{cplred} is equivalent to the steady-state of the system \eqref{reduced}-\eqref{cplred}, the analysis of existence of an equilibrium point follows \textit{verbatim}. Based on this consideration, in the present section we will only consider the system \eqref{reduced}-\eqref{cplred}, bearing in mind the the same results hold for the system \eqref{overallred}-\eqref{cplred}. \bk  
 To simplify the notation, we define
\begin{equation}\label{rprv}
\begin{aligned}
P_{{P}}^\mathrm{ref}:&=\mathrm{col}(P_{{P},j}^\mathrm{ref})\in\mathbb{R}^\mathsf{p},\quad R_P: =\mathcal{L}_P+G_P\in\mathbb{R}^{\mathsf{p}\times\mathsf{p}},\\
P_{{V}}^\mathrm{ref}:&=\mathrm{col}(P_{{V},k}^\mathrm{ref})\in\mathbb{R}^\mathsf{v}, \quad R_V: =\mathcal{L}_V+G_V+G_Z\in\mathbb{R}^{\mathsf{v}\times\mathsf{v}}.
\end{aligned}
\end{equation}
Furthermore, we   recall the following lemma, the proof of which can be found in \cite{barabanov}.\smallbreak

\begin{lemma}\label{lemma} Consider $m$ quadratic equations of the form $f_i:\mathbb{R}^n\rightarrow \mathbb{R}$,
\begin{equation}\label{LMIlemma}
f_i(x):=\frac{1}{2}x^\top\mathcal{A}_i x+x^\top\mathcal{B}_i,\qquad i\in[1,m],
\end{equation}
where $\mathcal{A}_i=\mathcal{A}_i^\top\in\mathbb{R}^{n\times n}$, $\mathcal{B}_i\in\mathbb{R}^n$, $  c_i\in\mathbb{R} $ and define:
\begin{equation*} 
\begin{aligned} 
\mathcal{A}(T):&=\sum_{i=1}^mt_i\mathcal{A}_i,\quad  \mathcal{B}(T):=\sum_{i=1}^mt_i\mathcal{B}_i,\quad
\mathcal{C}(T): =\sum_{i=1}^m t_ic_i.
\end{aligned} 
\end{equation*}
If the following LMI
\begin{equation*}
\Upsilon(T):=\begin{bmatrix}
\mathcal{A}(T)&\mathcal{B}(T)\\
\mathcal{B}^\top(T)&-2\mathcal{C}(T)
\end{bmatrix}>0,
\end{equation*}
is feasible, then the equations 
\begin{equation}\label{finP}
f_i(x)=c_i,\qquad i\in[1,m],
\end{equation}
have no solution.
\end{lemma}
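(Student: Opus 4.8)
The plan is to argue by contradiction, using the standard homogenization trick that turns the affine-quadratic data $(\mathcal{A}_i,\mathcal{B}_i,c_i)$ into a single quadratic form in an augmented variable. Suppose, contrary to the claim, that the system \eqref{finP} admits a solution $x^\star\in\mathbb{R}^n$, so that $f_i(x^\star)=c_i$ for every $i\in[1,m]$. The idea is to test the positive-definite matrix $\Upsilon(T)$ against the augmented vector
\[
z:=\begin{bmatrix} x^\star \\ 1 \end{bmatrix}\in\mathbb{R}^{n+1},
\]
which is nonzero because its last entry equals one.

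First I would expand the quadratic form $z^\top\Upsilon(T)z$ block by block. Using the scalar identity $\mathcal{B}^\top(T)x^\star=x^{\star\top}\mathcal{B}(T)$, this gives
\[
z^\top\Upsilon(T)z=x^{\star\top}\mathcal{A}(T)x^\star+2\,x^{\star\top}\mathcal{B}(T)-2\mathcal{C}(T).
\]
Next I would substitute the definitions $\mathcal{A}(T)=\sum_i t_i\mathcal{A}_i$, $\mathcal{B}(T)=\sum_i t_i\mathcal{B}_i$ and $\mathcal{C}(T)=\sum_i t_ic_i$, and collect terms by the index $i$. Recognizing from \eqref{LMIlemma} that $x^{\star\top}\mathcal{A}_ix^\star+2\,x^{\star\top}\mathcal{B}_i=2f_i(x^\star)$, the expression telescopes into
\[
z^\top\Upsilon(T)z=2\sum_{i=1}^m t_i\big(f_i(x^\star)-c_i\big).
\]

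The final step is to invoke the assumed solution: since $f_i(x^\star)=c_i$ for all $i$, every summand vanishes and hence $z^\top\Upsilon(T)z=0$. But feasibility of the LMI means $\Upsilon(T)>0$, so that $z^\top\Upsilon(T)z>0$ for the nonzero vector $z$---a contradiction. Therefore no such $x^\star$ can exist, which proves that \eqref{finP} has no solution.

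There is no deep obstacle in this argument: the only things to get right are the homogenization (appending the constant $1$) and the bookkeeping that collapses the three separate sums into the single factor $2\sum_i t_i(f_i(x^\star)-c_i)$. I would carefully track the factor of $2$ arising from the $\tfrac{1}{2}$ in \eqref{LMIlemma} together with the $-2\mathcal{C}(T)$ entry of $\Upsilon(T)$, since a sign or scaling slip there is the likeliest source of error. I would also emphasize that only feasibility of the LMI---the existence of some admissible $T$---is required, with no sign restriction on the entries $t_i$, and that the strictness of $\Upsilon(T)>0$ is essential: it is precisely what forces the strict inequality $z^\top\Upsilon(T)z>0$ that collides with the value $0$ imposed by the equations \eqref{finP}.
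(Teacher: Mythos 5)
Your proof is correct. The paper itself does not reproduce a proof of this lemma---it defers to the cited reference \cite{barabanov}---but your homogenization argument (testing $\Upsilon(T)>0$ against $z=\col(x^\star,1)\neq 0$ and collapsing the quadratic form to $2\sum_i t_i(f_i(x^\star)-c_i)=0$) is exactly the standard argument behind that result, and your bookkeeping of the factor of $2$ is right.
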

\smallbreak
We are now ready to formulate the following proposition, that establishes necessary, control parameter-dependent, conditions for the existence of equilibria of the system \eqref{reduced}-\eqref{cplred}.\smallbreak

\begin{proposition}\label{LMIprop} Consider the system \eqref{reduced}-\eqref{cplred}, for some given $P_{{P}}^\mathrm{ref}\in\mathbb{R}^\mathsf{p}$, $P_{{V}}^\mathrm{ref}\in\mathbb{R}^\mathsf{v}$. Suppose that there exist  two diagonal matrices $T_{{P}}\in\mathbb{R}^{\mathsf{p}\times\mathsf{p}}$ and $T_{{V}}\in\mathbb{R}^{\mathsf{v}\times\mathsf{v}}$ such that:
\begin{equation}\label{LMI}
\Upsilon(T_P,T_V)>0,
\end{equation}
with
\begin{equation*}
\Upsilon :=\begin{bmatrix}
T_{{P}} R_P+R_PT_{{P}}&T_{{P}}\mathcal{L}_m+\mathcal{L}_m^\top T_{{V}}&0\\
\star &T_{{V}} R_V +  R_V T_{{V}}&-T_{{V}}\bar u_V\\
\star&\star&-2( \mathsf{1}_\mathsf{p}^\top T_P  P_{{P}}^\mathrm{ref}+\mathsf{1}_\mathsf{v}^\top T_V P_{{V}}^\mathrm{ref})
\end{bmatrix},
\end{equation*}
where $P_P^\star$, $P_V^\star$, $R_P$ and $R_V$ are defined in \eqref{rprv}. Then  the system \eqref{reduced}-\eqref{cplred}  does not admit an equilibrium point.
\end{proposition}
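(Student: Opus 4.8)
The plan is to turn the equilibrium conditions into a system of quadratic equations in the voltage vector alone, and then invoke Lemma~\ref{lemma}. Setting $\dot v_P = 0$, $\dot v_V = 0$ in \eqref{reduced} gives
\[
R_P v_P + \mathcal{L}_m v_V = u_P, \qquad \mathcal{L}_m^\top v_P + R_V v_V = u_V + \bar u_V,
\]
which still contain the auxiliary currents $u_P, u_V$. The key device is to eliminate them through the algebraic constraints \eqref{cplred}: multiplying the $j$-th $PQ$ equation by $v_{P,j}$ and using $v_{P,j} u_{P,j} = P^\mathrm{ref}_{{P},j}$, and likewise the $k$-th voltage-controlled equation by $v_{V,k}$ with $v_{V,k} u_{V,k} = P^\mathrm{ref}_{{V},k}$, produces the purely quadratic relations
\[
v_{P,j}[R_P v_P]_j + v_{P,j}[\mathcal{L}_m v_V]_j = P^\mathrm{ref}_{{P},j}, \qquad v_{V,k}[R_V v_V]_k + v_{V,k}[\mathcal{L}_m^\top v_P]_k - \mu_{I,k}\, v_{V,k} = P^\mathrm{ref}_{{V},k}.
\]
Any equilibrium of \eqref{reduced}--\eqref{cplred} must satisfy these $\mathsf{c}$ equations, so it suffices to show they have no solution in $x := \col(v_P, v_V) \in \mathbb{R}^{\mathsf{c}}$.

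Next I would cast these equations in the template of Lemma~\ref{lemma}. Writing $e_j$, $e_k$ for the standard basis vectors, each scalar product such as $v_{P,j}[R_P v_P]_j = v_P^\top e_j e_j^\top R_P v_P$ is symmetrized into $\tfrac{1}{2} v_P^\top (e_j e_j^\top R_P + R_P e_j e_j^\top) v_P$, each cross term becomes an off-diagonal block, and the only linear contribution $-\mu_{I,k} v_{V,k}$ feeds into the vector $\mathcal{B}_k$. This identifies the matrices $\mathcal{A}_i$, the vectors $\mathcal{B}_i$, and the constants $c_i = P^\mathrm{ref}_{\cdot,\cdot}$ for each equation, with the diagonal entries of $T_P$ and $T_V$ serving as the scalar weights $t_i$ of the lemma.

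The core computation is then to verify that the weighted sums assemble exactly into the claimed matrix. Using $\sum_j t_{P,j}\, e_j e_j^\top = T_P$ and $\sum_k t_{V,k}\, e_k e_k^\top = T_V$, the diagonal blocks of $\mathcal{A}(T)$ collapse to $T_P R_P + R_P T_P$ and $T_V R_V + R_V T_V$, the off-diagonal block to $T_P \mathcal{L}_m + \mathcal{L}_m T_V$, the vector $\mathcal{B}(T)$ to $\col(0, -T_V \bar u_V)$, and $-2\mathcal{C}(T)$ to $-2(\mathsf{1}_\mathsf{p}^\top T_P P_{{P}}^\mathrm{ref} + \mathsf{1}_\mathsf{v}^\top T_V P_{{V}}^\mathrm{ref})$, which reproduces $\Upsilon$. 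Under the hypothesis $\Upsilon(T_P, T_V) > 0$, Lemma~\ref{lemma} then guarantees the quadratic system has no real solution, so \eqref{reduced}--\eqref{cplred} admits no equilibrium.

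I expect the main obstacle to be conceptual rather than technical: the elimination of $u_P, u_V$ by exploiting the hyperbolic constraints $v u = P^\mathrm{ref}$ is precisely what reduces the equilibrium problem to a quadratic system and makes Lemma~\ref{lemma} applicable; everything after that is careful but routine symmetrization and block bookkeeping, where the only genuine risk is a sign or transpose slip in the off-diagonal block and in the $\mathcal{B}(T)$ term. One point worth flagging is that Lemma~\ref{lemma} excludes solutions over all of $\mathbb{R}^{\mathsf{c}}$, which is strictly stronger than needed, so the physical positivity constraints $v_{P,j}, v_{V,k} > 0$ are never invoked in the argument.
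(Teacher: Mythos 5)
Your proposal is correct and follows essentially the same route as the paper's proof: set the time derivatives in \eqref{reduced} to zero, multiply the $j$-th and $k$-th equilibrium equations by $v_{P,j}^\star$ and $v_{V,k}^\star$ to eliminate $u_P,u_V$ via the hyperbolic constraints \eqref{cplred}, symmetrize into the quadratic form of Lemma~\ref{lemma}, and conclude from feasibility of \eqref{LMI}. Your dimensional bookkeeping of the off-diagonal block as $T_P\mathcal{L}_m+\mathcal{L}_m T_V$ and your remark that the lemma rules out solutions over all of $\mathbb{R}^{\mathsf{c}}$ (so positivity of the voltages is never needed) are both accurate observations consistent with the intended argument.
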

\begin{proof}
First of all, by setting the left-hand of the differential equations in \eqref{reduced} to zero and using \eqref{rprv}, we have:
\begin{equation*} 
\begin{aligned}
0=&- R_P  v_{{P}}^\star-\mathcal{L}_mv_{{V}}^\star+u_{{P}}^\star, \\
0=&-\mathcal{L}_m^\top v_{{P}}^\star- R_V v_{{V}}^\star+u_{{V}}^\star+\bar u_V.
\end{aligned} 
\end{equation*}
Left-multiplying  the first and second set of equations by $v_{{P},j}^\star$ and $v_{{V},k}^\star$ respectively, with  $j\sim\mathcal{E}_{{P}}$, $k\sim\mathcal{E}_{{V}}$, we get
\begin{equation*} 
\begin{aligned}
P_{{P},j}^\mathrm{ref}&=v_{{P},j}^\star R^\top_{P,j} v_{{P}}^\star+v_{{P},j}^\star\mathcal{L}^\top_{m,j}v_{{V}}^\star,\\
P_{{V},k}^\mathrm{ref}&=v_{{V},k}^\star \mathcal{L}_{m,k}  v_{{P}}^\star+v_{{V},k}^\star R_{V,k}^\top v_{{V}}^\star-v_{{V},k}^\star\bar u_{V,k},
\end{aligned} 
\end{equation*}
which, after some manipulations, gives
\begin{equation}\label{quadratic}
\begin{aligned}
c_i&=\frac{1}{2}(v^\star)^\top\mathcal{A}_iv^\star+(v^\star)^\top\mathcal{B}_i,
\end{aligned}
\end{equation}
with $i\sim\mathcal{E}_{P}\cup\mathcal{E}_V$, $v^\star:=\mathrm{col}(v_{{P}}^\star,v_{{V}}^\star)\in\mathbb{R}^\mathsf{c}$ and
\begin{equation*} 
\begin{aligned}
 \mathcal{A}_i:&= e_ie_i^\top\begin{bmatrix}
R_P&\mathcal{L}_m\\
\mathcal{L}_m^\top&R_V
\end{bmatrix}+\begin{bmatrix}
R_P&\mathcal{L}_m^\top\\
\mathcal{L}_m&R_V
\end{bmatrix}e_ie_i^\top,\quad
\mathcal{B}_i:=e_ie_i^\top \begin{bmatrix}
0\\\bar u_V
\end{bmatrix},\quad
c_i :=e_i^\top \begin{bmatrix}P_{P}^\mathrm{ref}\\ P_V^\mathrm{ref} \end{bmatrix}.
\end{aligned}
\end{equation*}
 Let consider the map $f(v^\star):\mathbb{R}^\mathsf{c}\rightarrow\mathbb{R}^\mathsf{c}$ with components
$$
f_i(v^\star)= \frac{1}{2}(v^\star)^\top\mathcal{A}_iv^\star,
$$
with $i\sim\mathcal{E}_{P}\cup\mathcal{E}_{V}$ and denote by $F$ the image of $\mathbb{R}^{\mathsf{c}}$ under this map.  The problem of solvability of such equations can be formulated as  in Lemma \ref{lemma}, \textit{i.e.} if the LMI \eqref{LMI} holds, then $\mathrm{col}(c_i^\star)$ is not in $F$, thus completing the proof.
\end{proof}
 \smallbreak
 
\begin{remark}  Note that the feasibility of the LMI \eqref{LMI} depends   on the system topology reflected in the Laplacian matrix $\mathcal{L}$ and \bk on the system parameters, among which $G_Z$, $\bar u_V$ and $P_{{V}}^\mathrm{ref}$ are free (primary) control parameters. Since the feasibility condition is only necessary for the existence of equilibria for \eqref{overallred}, it is of interest to determine regions for these parameters that imply non-existence of an equilibrium point.
\end{remark}
 \section{CONDITIONS FOR POWER SHARING}\label{sec_powersharing}
As already discussed, another control objective of primary control is the achievement of \textit{power sharing} among the\textit{ voltage-controlled units}. This property consists in guaranteeing an appropriate (proportional) power distribution among these units in steady-state. We next show that is possible to reformulate such a control objective as a set of quadratic constraints on the equilibrium point, assuming that it exists.   Since it is a steady-state property, the same observation done in Section \ref{sec_existence} applies, which means that the results obtained for the system \eqref{reduced}-\eqref{cplred} also hold for the system \eqref{overallred}-\eqref{cplred}. We introduce the following definition.\smallbreak

 \begin{definition}\label{psdef} Let be   \mbox{$v^\star:=(v_{{P}}^\star,v_{{V}}^\star)\in\mathbb{R}^\mathsf{c}$} and $P_{\mathrm{DC},V}(v^\star):=\mathrm{col}({P}_{\mathrm{DC},k}(v^\star_{C,k}))\in\mathbb{R}^{\mathsf{v}}$ respectively an equilibrium point for the system \eqref{reduced}-\eqref{cplred} and the collection of injected powers as defined by \eqref{DCinjected}, and let be $\Gamma:=\mathrm{diag}\{\gamma_k\}\in\mathbb{R}^{\mathsf{v}\times\mathsf{v}},$ a  positive definite matrix. Then $v^\star$ is said to possess the power sharing property with respect to $\Gamma$ if:
 \begin{equation}\label{ps}
\Gamma P_{\mathrm{DC},V}(v^\star) =\mathsf{1}_{\mathsf{v}}.
\end{equation}
\end{definition}
\smallbreak

Then we have the following lemma.\smallbreak
 
\begin{lemma}\label{lemma2} Let  $v^\star=(v_{{P}}^\star,v_{{V}}^\star)\in\mathbb{R}^\mathsf{c}$ be an equilibrium point for \eqref{reduced}-\eqref{cplred} and $\Gamma:=\mathrm{diag}\{\gamma_k\}\in\mathbb{R}^{\mathsf{v}\times\mathsf{v}}$ a  positive definite matrix. Then $v^\star$ possesses the power sharing property with respect to $\Gamma$ if an only if the quadratic equations
 \begin{equation}\label{quadraticps}
\frac{1}{2}(v^\star)^\top\mathcal{A}^{\mathrm{ps}}_kv^\star+(\mathcal{B}^{\mathrm{ps}}_k)^\top v^\star=p^{\mathrm{ps}}_k,
\end{equation}
with $ k\sim\mathcal{E}_{V}$ and where:
\begin{align*}
\mathcal{A}^{\mathrm{ps}}_k:&= 2\begin{bmatrix}0&0\\0&\Gamma G_Z\end{bmatrix}e_{k}e_{k}^\top ,\quad 
 \mathcal{B}^{\mathrm{ps}}_k:  =\begin{bmatrix}0\\\Gamma\bar u_V\end{bmatrix}e_ke_k^\top,\quad
  p^{\mathrm{ps}}_k:=e_k^\top\begin{bmatrix}
  0\\\Gamma P_{{V}}^\mathrm{ref},
\end{bmatrix}   
\end{align*}
admit  a solution.
\end{lemma}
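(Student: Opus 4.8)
The plan is to recognize that the matrix identity \eqref{ps} defining power sharing is merely the stacked form of $\mathsf{v}$ mutually independent scalar constraints—one per voltage-controlled unit—and that rewriting each of them as a quadratic form on $\mathbb{R}^{\mathsf{c}}$ produces precisely the equations \eqref{quadraticps}. Since every manipulation in this reformulation is a reversible equality, the ``if and only if'' will follow at once, with no auxiliary construction needed beyond the explicit polynomial \eqref{DCinjected}.

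First I would expand \eqref{ps} componentwise. Because $\Gamma=\diag\{\gamma_k\}$ is diagonal and $P_{\mathrm{DC},V}(v^\star)=\col(P_{\mathrm{DC},k}(v^\star_{C,k}))$ is stacked, the $k$-th row of $\Gamma P_{\mathrm{DC},V}(v^\star)=\mathsf{1}_{\mathsf{v}}$ reads simply $\gamma_kP_{\mathrm{DC},k}(v^\star_{C,k})=1$ for $k\sim\mathcal{E}_V$; substituting the explicit expression \eqref{DCinjected} turns it into a scalar quadratic in the single unknown $v^\star_{C,k}$. Next I would lift each such scalar equation back into $\mathbb{R}^{\mathsf{c}}$ through the basis vector $e_k\in\mathbb{R}^{\mathsf{c}}$ of the $k$-th voltage-controlled coordinate, writing $v^\star_{C,k}=e_k^\top v^\star$ and $(v^\star_{C,k})^2=(v^\star)^\top e_ke_k^\top v^\star$. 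The quadratic coefficient is then furnished by $G_Z=\diag\{-\mu_{Z,k}\}$, the linear one by the $k$-th entry of $\bar u_V=\col(\mu_{I,k})$ (so the linear term is $e_ke_k^\top\begin{bmatrix}0\\\Gamma\bar u_V\end{bmatrix}$, formed exactly as $\mathcal{B}_i$ is in the proof of Proposition~\ref{LMIprop}), and the constant by the $k$-th entry of $P_{V}^\mathrm{ref}$, all weighted by $\Gamma$. Here one uses that $\begin{bmatrix}0&0\\0&\Gamma G_Z\end{bmatrix}e_ke_k^\top$ collapses to its single surviving $(k,k)$ diagonal entry and is therefore symmetric, so the factor $2$ in $\mathcal{A}^{\mathrm{ps}}_k$ is balanced by the $\tfrac12$ in the quadratic form and no spurious coupling appears between distinct units or between the $v_P$ and $v_V$ blocks. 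Collecting the $\mathsf{v}$ rows then shows that the given $v^\star$ satisfies \eqref{ps} if and only if it satisfies every equation in \eqref{quadraticps}.

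As this lemma is a pure reformulation, I expect no genuine analytic obstacle; the entire content lies in the index and sign bookkeeping. The one place demanding care is reconciling the convention $G_Z=\diag\{-\mu_{Z,k}\}$ with the $+\mu_{Z,k}$ that multiplies $v^2_{C,k}$ in \eqref{DCinjected}, and correctly absorbing the unit right-hand side $\mathsf{1}_{\mathsf{v}}$ of \eqref{ps} when reading off the constant $p^{\mathrm{ps}}_k$ from the entry $\gamma_kP^\mathrm{ref}_{V,k}$. Once these signs and the additive constant are matched, \eqref{quadraticps} and the row-wise form of \eqref{ps} coincide verbatim, and the equivalence is established.
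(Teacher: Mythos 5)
Your proposal is correct and follows essentially the same route as the paper's own proof: expand \eqref{ps} row by row using the diagonality of $\Gamma$, substitute the explicit polynomial \eqref{DCinjected}, and rewrite each resulting scalar quadratic in the lifted form \eqref{quadraticps}, with the equivalence immediate because every step is a reversible identity. You in fact supply more detail than the paper (which compresses the lifting via $e_k$ into ``some straightforward manipulations'') and correctly flag the sign/ordering bookkeeping around $G_Z=\diag\{-\mu_{Z,k}\}$ and the constant term, which is where the paper's stated coefficients require care.
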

\begin{proof}
From \eqref{ps}  we have that by definition:
\begin{equation*}
\gamma_k P^\mathrm{ref}_{\mathrm{DC},k}(v_{C,k})=1,
\end{equation*}
with $ k\sim\mathcal{E}_{V}$, which by recalling \eqref{DCinjected}, is equivalent to:
\begin{equation*}
\gamma_k(P_{{V},k}^\mathrm{ref}+\mu_{I,k}v_{C,k}+\mu_{Z,k}v_{C,k}^2)=1.
\end{equation*}
After some straightforward manipulations, the above equalities can be rewritten as \eqref{quadraticps}, completing the proof.
\end{proof}
\smallbreak

An immediate implication of this lemma is given in the following proposition, which establishes necessary conditions for the existence of an equilibrium point that verifies the power sharing property.\smallbreak

\begin{proposition} Consider the system \eqref{reduced}-\eqref{cplred}, for some given $P_{{P}}^\mathrm{ref}$, $P_{{V}}^\mathrm{ref}$ and $\Gamma$. Suppose that there exist  three  diagonal matrices $T_{{P}}\in\mathbb{R}^{\mathsf{p}\times\mathsf{p}}$, $T_{{V}}\in\mathbb{R}^{\mathsf{v}\times\mathsf{v}},T_V^{\mathrm{ps}}\in\mathbb{R}^{\mathsf{v}\times\mathsf{v}}$,    such that:
\begin{equation}\label{LMIps}
\Upsilon(T_P,T_V)+\Upsilon_{\mathrm{ps}}(T_V^{\mathrm{ps}})
>0,
\end{equation}
with
\begin{equation*}\Upsilon_{\mathrm{ps}}:=
\begin{bmatrix}
0&0&0\\
\star&2T_V^{\mathrm{ps}}\Gamma G_Z&T_V^{\mathrm{ps}}\Gamma\bar u_V\\
\star&\star&-2T_V^{\mathrm{ps}}(\mathsf{1}_\mathsf{v} -\Gamma P_{{V}}^\mathrm{ref})
\end{bmatrix}.
\end{equation*}
Then the system \eqref{reduced}-\eqref{cplred} does not admit an equilibrium point that verifies the power sharing property.
\end{proposition}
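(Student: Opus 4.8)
The plan is to treat a power-sharing equilibrium as a common solution of two families of quadratic equations---the equilibrium equations already derived in the proof of Proposition~\ref{LMIprop} together with the power-sharing equations of Lemma~\ref{lemma2}---and then to apply the non-existence certificate of Lemma~\ref{lemma} to the merged family. First I would suppose, towards a contradiction, that $v^\star=\col(v_{P}^\star,v_{V}^\star)\in\mathbb{R}^\mathsf{c}$ is an equilibrium of \eqref{reduced}-\eqref{cplred} that possesses the power-sharing property with respect to $\Gamma$. Repeating \emph{verbatim} the computation leading to \eqref{quadratic}, such a $v^\star$ satisfies $\frac{1}{2}(v^\star)^\top\mathcal{A}_iv^\star+(v^\star)^\top\mathcal{B}_i=c_i$ for $i\sim\mathcal{E}_P\cup\mathcal{E}_V$, with $\mathcal{A}_i,\mathcal{B}_i,c_i$ as in Proposition~\ref{LMIprop}; and, by Lemma~\ref{lemma2}, it also satisfies the power-sharing equations \eqref{quadraticps}, namely $\frac{1}{2}(v^\star)^\top\mathcal{A}^\mathrm{ps}_kv^\star+(\mathcal{B}^\mathrm{ps}_k)^\top v^\star=p^\mathrm{ps}_k$ for $k\sim\mathcal{E}_V$. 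Hence $v^\star$ solves the enlarged system of $\mathsf{c}+\mathsf{v}$ quadratic equations obtained by collecting both sets.

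Next I would invoke Lemma~\ref{lemma} for this enlarged system. The decisive structural fact is that the Lemma's certificate is \emph{linear} in the multipliers and that both families are quadratic in the \emph{same} variable $v^\star\in\mathbb{R}^\mathsf{c}$; consequently their homogenizations, obtained by appending a unit coordinate, live in a common $\mathbb{R}^{\mathsf{c}+1}$ and the certificate matrix splits additively. Concretely, I would collect the multipliers of the $\mathsf{c}$ equilibrium equations into the diagonal matrices $T_P,T_V$ and those of the $\mathsf{v}$ power-sharing equations into $T_V^\mathrm{ps}$, and set $z:=\col(v^\star,1)\in\mathbb{R}^{\mathsf{c}+1}$. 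For each equilibrium equation the associated homogenized quadratic form vanishes at $z$, since $z^\top\!\begin{bmatrix}\mathcal{A}_i&\mathcal{B}_i\\ \mathcal{B}_i^\top&-2c_i\end{bmatrix}\!z=(v^\star)^\top\mathcal{A}_iv^\star+2(v^\star)^\top\mathcal{B}_i-2c_i=0$ by \eqref{quadratic}, and similarly the power-sharing forms vanish by \eqref{quadraticps}. Summing these identities weighted by the respective multipliers produces $z^\top\big[\Upsilon(T_P,T_V)+\Upsilon_\mathrm{ps}(T_V^\mathrm{ps})\big]z=0$, in which the first summand is precisely the matrix computed in Proposition~\ref{LMIprop}.

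It then remains to check that the second summand is exactly $\Upsilon_\mathrm{ps}(T_V^\mathrm{ps})$, and this is the only place that calls for genuine computation. I would evaluate the three block sums $\sum_k t^\mathrm{ps}_k\mathcal{A}^\mathrm{ps}_k$, $\sum_k t^\mathrm{ps}_k\mathcal{B}^\mathrm{ps}_k$ and $\sum_k t^\mathrm{ps}_kp^\mathrm{ps}_k$, using the identity $\sum_k t^\mathrm{ps}_k e_ke_k^\top=T_V^\mathrm{ps}$ (embedded in the voltage-controlled block) and the commutativity of the diagonal matrices $T_V^\mathrm{ps}$, $\Gamma$ and $G_Z$. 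These collapse respectively to the blocks $2T_V^\mathrm{ps}\Gamma G_Z$, $T_V^\mathrm{ps}\Gamma\bar u_V$ and $-2T_V^\mathrm{ps}(\mathsf{1}_\mathsf{v}-\Gamma P_V^\mathrm{ref})$, i.e. exactly the entries of $\Upsilon_\mathrm{ps}$. Finally, since $z\neq0$ (its last entry is $1$), the hypothesised feasibility of \eqref{LMIps} forces $z^\top[\Upsilon+\Upsilon_\mathrm{ps}]z>0$, contradicting the identity above; hence no such power-sharing equilibrium exists.

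The conceptual content is therefore an immediate extension of Proposition~\ref{LMIprop}: the non-existence conclusion is handed over wholesale by Lemma~\ref{lemma} once the two constraint families are merged. I expect the main obstacle to be purely organisational---homogenizing both families within the same $(\mathsf{c}+1)$-dimensional frame so that their certificates add, and matching the power-sharing blocks to the stated matrix $\Upsilon_\mathrm{ps}$ with the correct signs, which originate from $G_Z=\diag\{-\mu_{Z,k}\}$ and from the rewriting of $\gamma_kP_{\mathrm{DC},k}(v^\star_{C,k})=1$ into the quadratic form \eqref{quadraticps}.
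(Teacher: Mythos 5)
Your proposal is correct and follows essentially the same route as the paper: rewrite the power-sharing constraint as the quadratic equations of Lemma~\ref{lemma2}, merge them with the equilibrium quadratics \eqref{quadratic}, and apply the non-existence certificate of Lemma~\ref{lemma} to the enlarged family, with the multipliers split into $T_P,T_V$ and $T_V^{\mathrm{ps}}$ so that the certificate matrix is exactly $\Upsilon+\Upsilon_{\mathrm{ps}}$. The only difference is that you unpack the proof of Lemma~\ref{lemma} (the homogenization $z=\col(v^\star,1)$ and the additivity of the certificate) and verify the blocks of $\Upsilon_{\mathrm{ps}}$ explicitly, details the paper leaves implicit in its two-line sketch.
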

\begin{proof}
The proof is similar to the proof of Proposition \ref{LMIprop}. By using Lemma \ref{lemma2} the power sharing constraints can be indeed  rewritten as quadratic equations, similarly to \eqref{quadratic}. Hence, it suffices to apply Lemma \ref{lemma} to the quadratic equations \eqref{quadratic}, \eqref{quadraticps}  to complete the proof.
\end{proof}

\section{CONDITIONS FOR LOCAL ASYMPTOTIC STABILITY}\label{sec_stability}
We now present a result on stability of a given equilibrium point for the system \eqref{overallred}-\eqref{cplred}. The result is obtained by applying Lyapunov's first method.\smallbreak

\begin{proposition} Consider the system \eqref{overallred}-\eqref{cplred} and assume that $v^\star=(v_{{P}}^\star,v_{{V}}^\star,i_T^\star)\in\mathbb{R}^\mathsf{m}$ is an equilibrium point. Let 
\begin{equation}\label{defG}
\begin{aligned}
G_{{P}}^\star:&=\mathrm{diag} \left\lbrace \frac{P_{{P},j}^\mathrm{ref}}{(v_{{P},j}^\star)^2}\right\rbrace \in\mathbb{R}^{\mathsf{p}\times\mathsf{p}},\quad
G_{{V}}^\star:=\mathrm{diag}\left\lbrace \frac{P_{{V},k}^\mathrm{ref}}{(v_{{V},k}^\star)^2}\right\rbrace \in\mathbb{R}^{\mathsf{v}\times\mathsf{v}},
\end{aligned}
\end{equation}
and
\begin{equation*}
J(v^\star):=-\begin{bmatrix}
-C_{{P}}^{-1}(G_P+G_{{P}}^\star)&0&-C_{{P}}^{-1}\mathcal{B}_P\\
0&-C_{{V}}^{-1}(G_V+G_{{V}}^\star)&-C_V^{-1}\mathcal{B}_V\\
L_T^{-1}\mathcal{B}_P^\top&L_T^{-1}\mathcal{B}_V^\top&-L_T^{-1}R_T
\end{bmatrix}. 
\end{equation*}
 Then if:
\begin{itemize}
\item[-]  all eigenvalues $\lambda_i$ of $J$ are such that 
$$\mathfrak{Re} \{ \lambda_i \left[J(v^\star)\right]\}<0,$$
 the equilibrium point  $ v^\star $ is locally asymptotically stable;
\item[-]  there exists at least one eigenvalue $\lambda_i$ of $J$ such that 
$$\mathfrak{Re} \{ \lambda_i \left[J(v^\star)\right]\}>0,$$
  the equilibrium point  $ v^\star$  is unstable.
\end{itemize}
\end{proposition}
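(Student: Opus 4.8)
The plan is to convert the differential-algebraic system \eqref{overallred}-\eqref{cplred} into an explicit ordinary differential equation by eliminating the algebraic current variables $u_P,u_V$ through the constraints \eqref{cplred}, and then to apply Lyapunov's first (indirect) method to the resulting smooth vector field at $v^\star$.

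First I would use the constraints \eqref{cplred} to solve for the algebraic variables. Since the voltage states live in $\mathbb{R}_+$, the equilibrium components $v_{P,j}^\star$ and $v_{V,k}^\star$ are strictly positive, so the scalar maps $v_{P,j}\mapsto P^\mathrm{ref}_{P,j}/v_{P,j}$ and $v_{V,k}\mapsto P^\mathrm{ref}_{V,k}/v_{V,k}$ are well defined and $C^\infty$ on a neighborhood of $v^\star$. Substituting $u_{P,j}=P^\mathrm{ref}_{P,j}/v_{P,j}$ and $u_{V,k}=P^\mathrm{ref}_{V,k}/v_{V,k}$ into the right-hand side of \eqref{overallred} produces a genuine ODE $\dot x=f(x)$, with $x:=\mathrm{col}(v_P,v_V,i_T)$, whose only nonlinear terms are these scalar hyperbolic functions; all remaining terms are the constant coefficient block appearing in \eqref{overallred}.

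Next I would compute the Jacobian $\partial f/\partial x$ at $x=x^\star$. After premultiplying \eqref{overallred} by $\mathrm{bdiag}\{C_P^{-1},C_V^{-1},L_T^{-1}\}$, the constant coefficient block contributes its matrix unchanged to the Jacobian, and the only additional contribution comes from differentiating the hyperbolic terms. Using $\frac{d}{dv}\!\left(P^\mathrm{ref}/v\right)=-P^\mathrm{ref}/v^2$ and evaluating at the equilibrium yields exactly the diagonal matrices $-G_P^\star$ and $-G_V^\star$ of \eqref{defG}, which add to the $-G_P$ and $-G_V$ blocks. Collecting all terms reproduces the matrix $J(v^\star)$ in the statement.

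The conclusion then follows from Lyapunov's first method: as $f$ is $C^1$ near $x^\star$ and $f(x^\star)=0$, strict negativity of $\mathfrak{Re}\{\lambda_i[J(v^\star)]\}$ for all $i$ implies local asymptotic stability, while the existence of an eigenvalue with $\mathfrak{Re}\{\lambda_i[J(v^\star)]\}>0$ implies instability. The step that deserves the most care---and which I expect to be the main obstacle---is the legitimacy of eliminating $u_P,u_V$: one must verify that the constraint map has nonsingular Jacobian at $v^\star$, equivalently that no equilibrium voltage vanishes, so that \eqref{overallred}-\eqref{cplred} is an index-one DAE whose local flow coincides with that of the reduced ODE. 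Granting this, the stability classification is immediate and the eigenvalue bookkeeping is routine.
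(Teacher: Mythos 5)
Your proposal is correct and follows essentially the same route as the paper: the paper likewise linearizes \eqref{overallred} around $v^\star$, obtains the blocks $-G_P^\star$, $-G_V^\star$ by implicitly differentiating the hyperbolic constraints \eqref{cplred} (equivalently, your explicit substitution $u=P^{\mathrm{ref}}/v$ followed by $\frac{d}{dv}(P^{\mathrm{ref}}/v)=-P^{\mathrm{ref}}/v^2$), and concludes by Lyapunov's first method. Your extra remark that the elimination requires nonvanishing equilibrium voltages is a point the paper leaves implicit, but it does not change the argument.
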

\begin{proof}
The first-order approximation of the system \eqref{overallred}-\eqref{cplred} around  $v^\star$ is given by:
\begin{equation}\label{jacobian}
\begin{aligned}
\begin{bmatrix}
C_P\dot v_{{P}}\\C_V\dot v_{{V}}\\L_T\dot i_T
\end{bmatrix}  =&\begin{bmatrix}
-G_P& 0&-\mathcal{B}_P\\
0& -G_V &-\mathcal{B}_V\\
\mathcal{B}_P^\top&\mathcal{B}_V^\top &-R_T \end{bmatrix}\begin{bmatrix}
v_{{P}}\\v_{{V}}\\i_T
\end{bmatrix} +\begin{bmatrix}
 \frac{\partial i_{{P}} }{\partial v_{{P}}}\big\vert_{v^\star}& 0&0\\
0& \frac{\partial i_{{V}} }{\partial v_{{V}}}\big\vert_{v^\star}&0\\
 0&0&0
\end{bmatrix}\begin{bmatrix}
v_P\\v_V\\i_T
\end{bmatrix}
\end{aligned} 
\end{equation}

Differentiating \eqref{cplred} with respect to $v_{{P}} $, $v_{{V}} $, yields:
\begin{equation*}
\begin{aligned}
0_{\mathsf{p}\times\mathsf{p}}&= \frac{\partial i_{{P}}}{\partial v_{{P}}}\cdot\mathrm{diag}\{v_{{P},j}\}+\mathrm{diag}\{i_{{P},j}\},\quad
0_{\mathsf{v}\times\mathsf{v}}=\frac{\partial i_{{V}}}{\partial v_{{V}}}\cdot\mathrm{diag}\{v_{{V},k}\}+\mathrm{diag}\{i_{{V},k}\}.
\end{aligned}
\end{equation*}
By using \eqref{defG}, it  follows that
\begin{equation*}
\frac{\partial i_{{P}}}{\partial v_{{P}}}\bigg\vert_{v^\star}=-G_{{P}}^\star,\quad \frac{\partial i_{{V}}}{\partial v_{{V}}}\bigg\vert_{v^\star}=-G_{{V}}^\star .
\end{equation*}
The proof is   completed by replacing into \eqref{jacobian} and invoking Lyapunov's first method.
\end{proof}

\section{AN ILLUSTRATIVE EXAMPLE}\label{sec_example}
In order to validate the results on existence of equilibria and power sharing for the system \eqref{overallred}-\eqref{cplred} we next provide an illustrative example. Namely, we consider the four-terminal hvdc transmission system depicted in Fig.~\ref{illudroop}, the parameters of which are given in Table \ref{par4hvdc}. 

\begin{table*}
\caption{System parameters.}
\label{par4hvdc}       
%
%
\centering
\begin{tabular}{p{0.5cm}p{1.5cm}p{0.5cm}p{1.7cm}p{0.5cm}p{1.7cm}p{0.5cm}p{1.7cm}p{0.5cm}p{1.7cm}}
\hline
  & Value &   & Value & & Value &    & Value &  & Value  \\
\hline
 $G_{i}$   & $0$ $\Omega^{-1} $ &  $P_{V,1}^\star$  & $30$ $MW$  & $P_{P,2}^\star$ & $-20$ $MW $ &
         $P_{V,3}^\star$               & $9$ $MW  $ &  $P_{P,4}^\star$ & $-24$ $MW$ \\
$C_{i}$  & $20$ $\mu F $ &  $G_{12}$ &$0.1 $ $\Omega^{-1}  $ & $G_{14}$ & $0.15$ $\Omega^{-1}  $ &
         $G_{23}$               & $0.11$ $\Omega^{-1}  $ &  $G_{24}$ & $0.08$ $\Omega^{-1} $   \\ 
\hline 
\end{tabular}
\end{table*}
  \begin{figure}[h!]
 \centering
 \includegraphics[width=.65\linewidth]{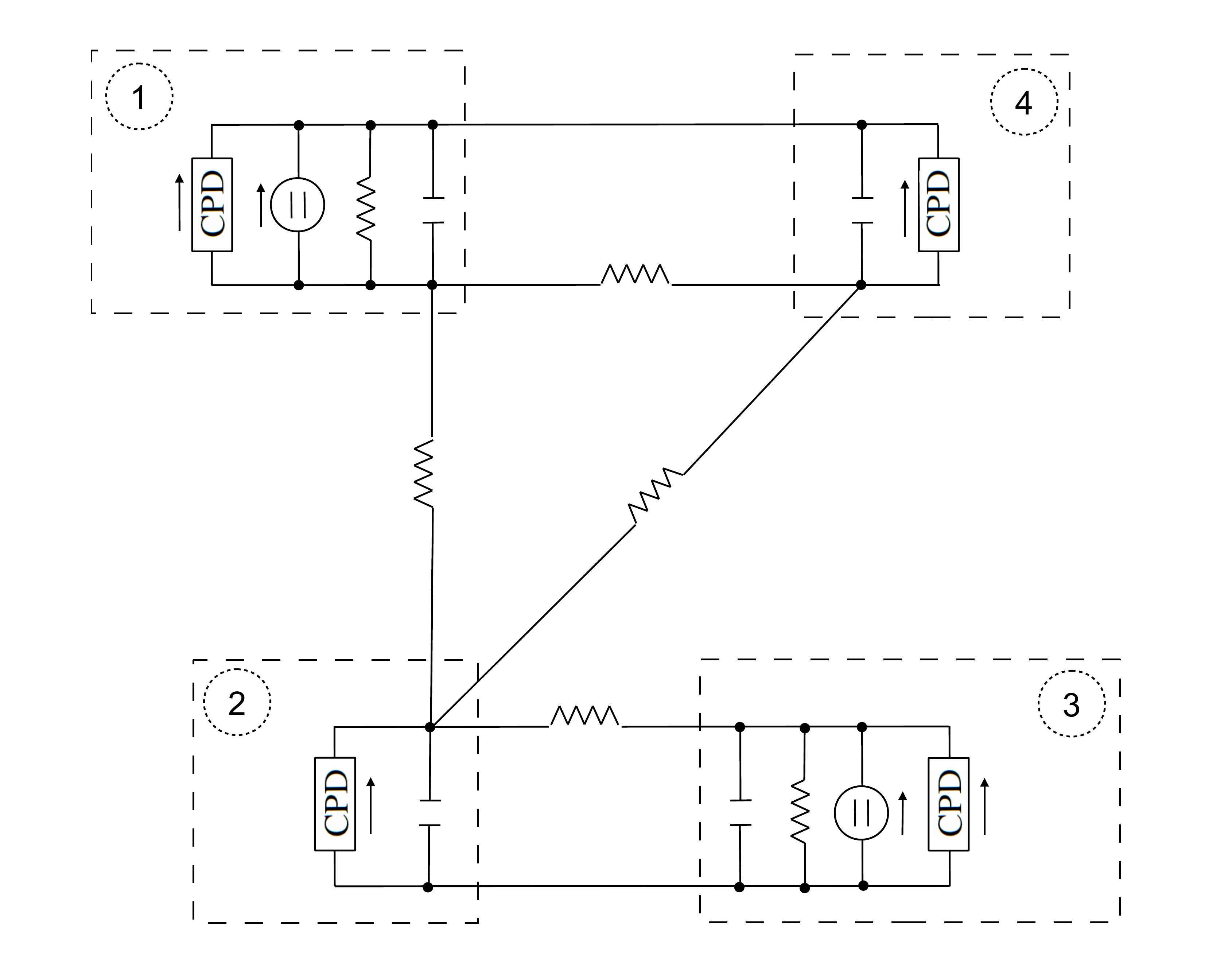}
 \caption{Four-terminal hvdc transmission system. }
 \label{illudroop}
\end{figure}

Since $\mathsf{c}=\mathsf{t}=4$, the graph associated to the hvdc system has $\mathsf{n}=4+1=5$ nodes  and $\mathsf{m}=4+4=8$ edges. We then make the following assumptions.
\begin{itemize}
\item[-] Terminal 1   and Terminal 3   are equipped with primary control, from which it follows that there are $\mathsf{p}=2$ \textit{PQ units} and $\mathsf{v}=2$ \textit{voltage-controlled units}. More precisely we take
$$
\delta_k(v_{C,k})=-\frac{d_k}{V_{d,k}^\star}(v_{C,k}-v_{C}^{\mathrm{nom}}),\quad k=\{1,3\}.
$$
This is the well-known \textit{voltage droop control}, where $d_k$ is a free control parameter, while $v_{C}^\mathrm{nom}$ is the nominal voltage of the hvdc system, see also Remark \ref{drooprem}.
\item[-] The power has to be shared equally among terminal 1 and terminal 3, from which it follows that $\Gamma=\mathbb{I}_2$ in Definition \ref{psdef}.
\end{itemize}

The next results are obtained by investigating the feasibility of the LMIs \eqref{LMI}, \eqref{LMIps} as a function of the free control parameters $d_1$ and $d_3$. For this purpose, CVX, a package for specifying and solving convex programs, has been used to solve the semidefinite programming feasibility problem \cite{cvx}. By using a gridding approach, the regions of the (positive) parameters that guarantee feasibility (yellow) and unfeasibility (blue) of the  LMI \eqref{LMI} are shown in Fig.~\ref{FigLMI}, while in Fig.~\ref{FigLMI2} the same is done with respect to the LMI \eqref{LMIps}. We deduce that a necessary condition for the existence of an equilibrium point is that the control parameters are chosen inside the blue region of Fig.~\ref{FigLMI}. Similarly, a necessary condition  for the existence of an equilibrium point that further possesses the power sharing property is that the control parameters are chosen inside the blue region of Fig.~\ref{FigLMI2}.
\begin{figure} 
 \centering
 \includegraphics[width=.65\linewidth]{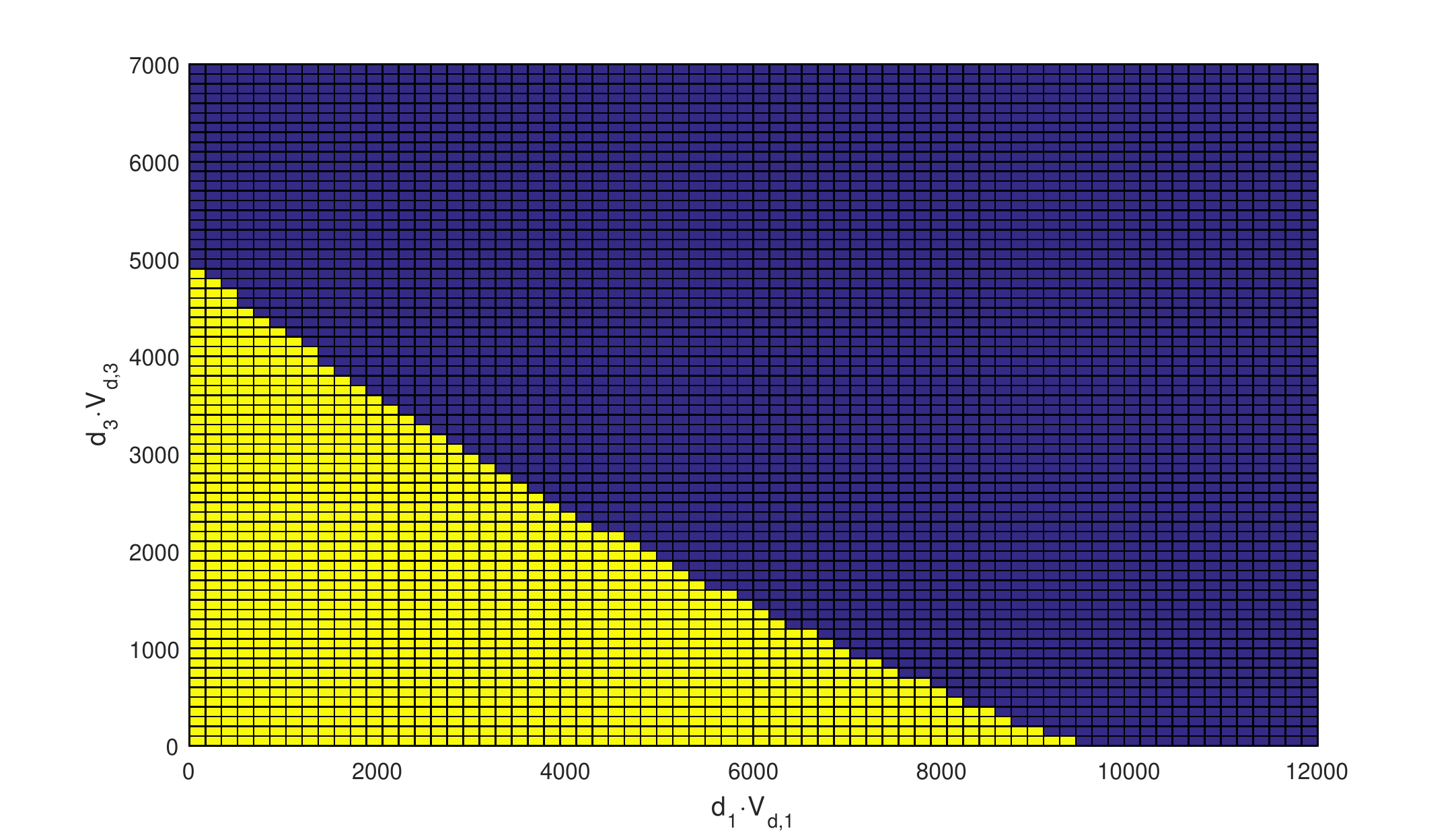}
 \caption{Feasibility regions of the LMI \eqref{LMI} on the plane $(d_1,d_3)$ of droop control parameters. Regions are yellow-coloured if the LMI is feasible and blue-coloured if the LMI is unfeasible.}
 \label{FigLMI}
\end{figure}
\begin{figure} 
 \centering
 \includegraphics[width=.65\linewidth]{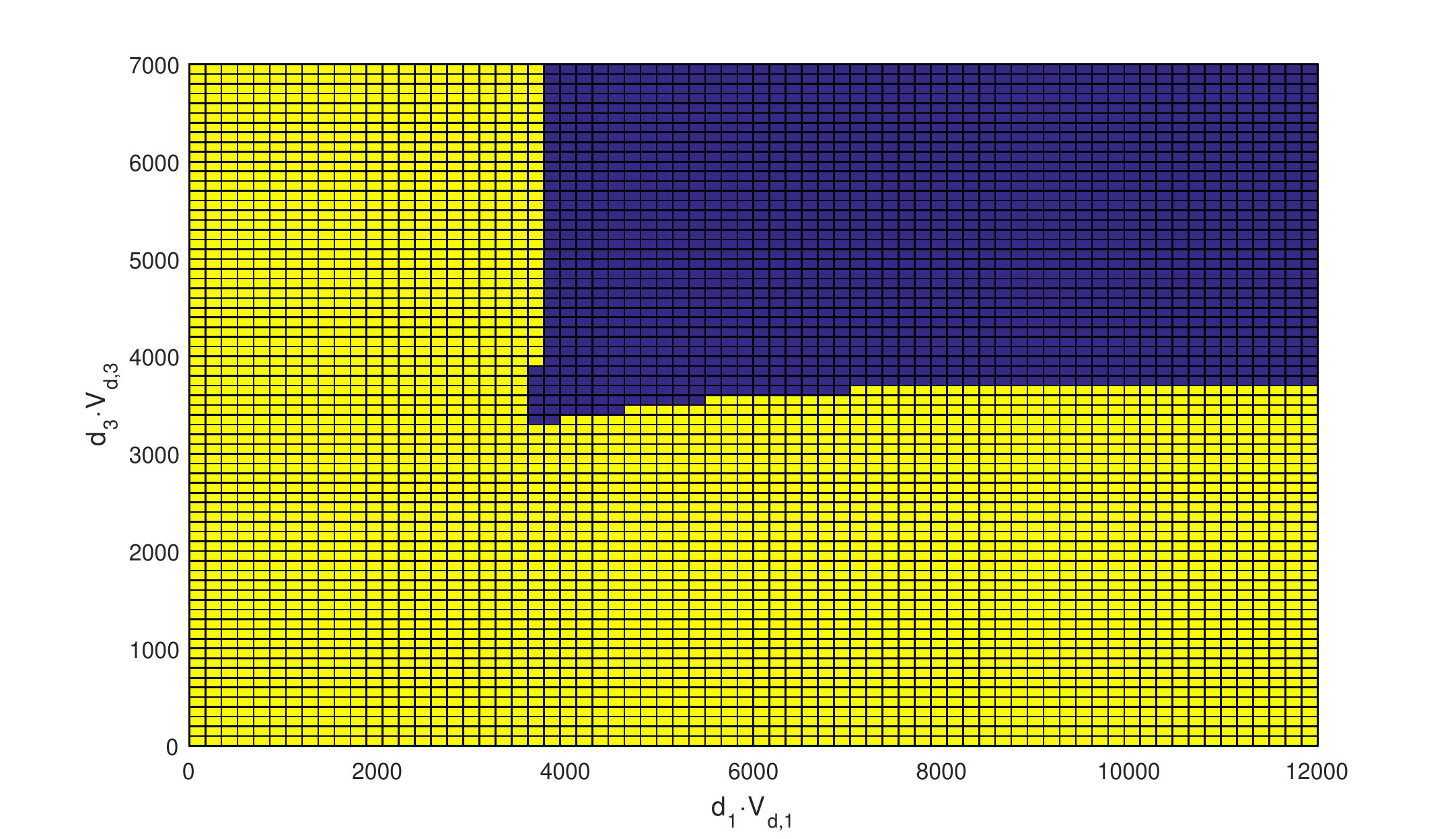}
 \caption{Feasibility regions of the LMI \eqref{LMIps} on the plane $(d_1,d_3)$ of droop control parameters. Regions are yellow-coloured if the LMI is feasible and blue-coloured if the LMI is unfeasible.}
 \label{FigLMI2}
\end{figure}


\section{CONCLUSIONS AND FUTURE WORKS}
 
In this paper, a new nonlinear model for primary control analysis and design has been derived. Primary control laws are described by   equivalent ZIP models, which include the standard voltage droop control  as a special case. A necessary condition for the existence of equilibria in the form of an LMI---which depends on the parameters of the controllers---is established, thus showing that an inappropriate choice of the latter may lead to non-existence of equilibria for the closed-loop system. The same approach is extended to the problem of existence of equilibria that verify a pre-specified power sharing property.   The obtained necessary conditions can be helpful to system operators to tune their controllers such that regions where the closed-loop system will definitely not admit a stationary operating point are excluded. In that regard, the present paper is a first, fundamental stepping stone towards the development of a better understanding of how existence of stationary solutions of hvdc systems are affected by the system parameters, in particular the network impedances and controller gains. \bk A final contribution consists in the establishment of conditions of local asymptotic stability of a given equilibrium point. The obtained results are illustrated on a four-terminal  example. \\

Starting from the obtained model, future research will concern various aspects. First of all, a better understanding of how the feasibility of the LMIs are affected by the parameters is necessary. A first consideration is that the established conditions, besides on the controllers parameters, also depends on the network topology and the dissipation via the Laplacian matrix induced by the electrical network. This suggests that the location of the \textit{voltage-controlled units}, as well as the network impedances, play an important role on the existence of equilibria for the system. Similarly, it is of interest to understand in which measure the values of Z, I and P components of the equivalent ZIP mode affect the LMIs, in order to provide guidelines for the design of primary controllers.   Furthermore, the possibility to combine the obtained necessary conditions with related (sufficient) conditions from the literature, \textit{e.g.} in \cite{bolognanipower}, is very interesting and timely. \bk Other possible developments will focus on the establishment of necessary (possibly sufficient) conditions for the existence of equilibria in different scenarios:  small deviations from the nominal voltage \cite{simpsoncpl,Beerten2013}; power unit outages \cite{Beerten2013}; linear three-phase, ac circuit, investigating the role played by reactive power \cite{sanchez}. 
\section{ACKNOWLEDGMENTS}
The authors acknowledge the support of: the Future Renewable Electric Energy Distribution Management Center (FREEDM), a National Science Foundation supported Engineering Research Center, under grant NSF EEC-0812121; the Ministry of Education and Science of Russian Federation (Project14.Z50.31.0031); the European Union's Horizon 2020 research and innovation programme under the Marie Sklodowska-Curie grant agreement No. 734832.

\end{document}